\newtheorem{theorem}{Theorem}
\newtheorem{lemma}[theorem]{Lemma}
\newtheorem{definition}[theorem]{Definition}
\newtheorem{corollary}[theorem]{Corollary}
\begin{document}
\title{Lattices over Eisenstein Integers for Compute-and-Forward}
\author{Nihat Engin Tunali\dag, Yu-Chih Huang\dag, Joseph J. Boutros\S, and\\
Krishna R. Narayanan\dag\\
\dag Department of Electrical and Computer Engineering, Texas A\&M University\\
\S Department of Electrical Engineering, Texas A\&M University at Qatar\\
{\tt\small {\{engintunali@gmail.com, jerry.yc.huang@gmail.com, boutros@tamu.edu, krn@ece.tamu.edu\}} }
\thanks{The paper was presented in part at the 2011 Banff Workshop on Algebraic Structure in Network Information Theory,
at the 2012 Information Theory and its Application Workshop, and the 2012 Allerton Conference on Communications, Control and
Computing.}}

\maketitle
\begin{abstract}
In this paper, we consider the use of lattice codes over Eisenstein integers for implementing a compute-and-forward protocol
in wireless networks when channel state information is not available at the transmitter. We extend the compute-and-forward
paradigm of Nazer and Gastpar to decoding Eisenstein integer combinations of transmitted messages at relays by proving the existence
of a sequence of pairs of nested lattices over Eisenstein integers in which the coarse lattice is good for covering and the fine lattice
can achieve the Poltyrev limit. Using this result, we show that both the outage performance and error-correcting performance of nested lattice codebooks over Eisenstein integers surpasses lattice codebooks over integers considered by Nazer and Gastpar with no additional computational complexity.

\end{abstract}

\begin{keywords}
Compute-and-Forward, Lattice codes, Eisenstein integers
\end{keywords}

\section{Introduction}
Compute-and-forward is a novel relaying paradigm in wireless communications in which relays in a network directly compute or decode functions of signals
transmitted from multiple transmitters and forward them to a central destination. One of the most effective ways to implement a compute-and-forward scheme
is to employ lattice codes at each transmitter. Since a lattice is closed under integer addition, lattice codes are naturally suited to decoding integer
linear combinations of transmitted signals.

Lattice codes have been shown to be optimal for several problems in communications including coding for the point-to-point additive white Gaussian noise
(AWGN) channel \cite{erez2004achieving} and coding with side information problems such as
the dirty paper coding problem and Wyner-Ziv problem \cite{zamirmultiterminal}. The construction of optimal lattice codes for these problems
requires a lattice that is good for channel coding. Since a lattice has unconstrained power, goodness for channel coding is measured using Poltyrev's idea of the unconstrained AWGN channel. In \cite{Polty}, Poltyrev derives the maximum noise variance that a lattice can tolerate while maintaining reliable communication over the unconstrained point-to-point AWGN channel, which is referred to as the Poltyrev limit in literature.
Loeliger showed the existence of lattices that achieve the Poltyrev limit by means of Construction A in \cite{loeliger}.
Then, Erez \textit{et} \textit{al.}, showed that there exists lattices which are simultaneously good for quantization and can achieve the Poltyrev limit in \cite{erez2005lattices} which made it possible to construct nested lattice
codes that were able to achieve a rate of $\frac{1}{2}\log{\left(1+\text{SNR}\right)}$ over the point-to-point AWGN channel. There has also been great interest in constructing lattice codes with reasonable encoding and decoding complexities such as Signal Codes and Low Density Lattice Codes \cite{SC}, \cite{LDLC}.

In a bidirectional relay network when channel state information is available at the transmitters, the transmitters can compensate for the channel gains and the relay can decode to the sum of the transmitted signals, which is a special case of compute-and-forward. For this system model, it was shown that an exchange rate of $\frac{1}{2}\log{\left(\frac{1}{2}+\text{SNR}\right)}$ can be achieved using nested lattice codes at the transmitters, which is optimal for asymptotically large signal-to-noise ratios and provides substantial gains over other relaying paradigms such as amplify-and-forward and decode-and-forward \cite{wilson}, \cite{Nam}. In \cite{Niesen}, a novel compute-and-forward implementation is proposed for the $K\times K$ AWGN interference network where channel state information is available at the transmitters, which achieves the full $K$ degrees of freedom.

We consider the case when channel state information is not available at the transmitters. In this case, an effective way to implement a compute-and-forward scheme is to allow the relay to adaptively choose the integer coefficients depending on the channel coefficients. Nazer and Gastpar have introduced and analyzed such a scheme which uses lattices over integers and they have derived achievable information rates in \cite{nazer2011CF}. In \cite{feng}, Feng, Silva and Kschischang have introduced an algebraic framework for designing lattice codes for compute-and-forward. The framework in \cite{feng} is quite general in the sense that every lattice partition based compute-and-forward scheme can be put into this framework, including the one by Nazer and Gastpar in \cite{nazer2011CF}. However, \cite{feng} does not provide a means to identify good lattice partition based schemes.

In this paper, we contribute to the literature by identifying a lattice partition based compute-and-forward scheme which is particularly good for approximating channel coefficients from the complex field. Our scheme can be regarded as an extension of the scheme in \cite{nazer2011CF} to lattices over Eisenstein integers. We show that an improvement in outage performance and error-correcting performance can be obtained compared to using lattices over integers. We proceed by proving the existence of a sequence of nested lattices over Eisenstein integers in which the coarse lattice is good for covering and the fine lattice achieves the Poltyrev limit. Using this result, we can show similar results to those in \cite{nazer2011CF} with integers replaced by Eisenstein integers. The main improvement in outage and error-correcting performance is a consequence of that the use of lattices over Eisenstein integers permits the relay to decode to a linear combination of the transmitted signals where the coefficients are Eisenstein integers, which quantize channel coefficients better than Gaussian integers.

\color{black}Recently, we became aware of an independent work by Sun \emph{et. al.} \cite{other_eis} where lattice network codes over Eisenstein integers are also considered. The main focus in \cite{other_eis} is the analysis of the decoding error probability, which suggests that lattice network codes built over Eisenstein integers can provide significant coding gains over lattice network codes built over Gaussian integers. Our work differs from \cite{other_eis} in the following ways. While their focus is on constructing finite constellations from lattice partitions which are suitable for compute-and-forward, we consider construction of lattices (infinite constellations) over Eisenstein integers and show the optimality of such construction. Moreover, their coding scheme can be regarded as the concatenation of a linear code over an appropriate finite field and a constellation carved from a lattice partition. On the other hand, our scheme is a more general one which is formed by the quotient group of a lattice over Eisenstein integers and its sublattice. It can be shown that the scheme in \cite{other_eis} is a special case of ours with hypercube shaping\footnote{Here, we use the term ``hypercube shaping" to denote a scheme using a properly scaled version of Eisenstein integers as shaping (coarse) lattice. Thus, when $\mathbb{Z}$ or $\mathbb{Z}[i]$ are considered, the shape is a hypercube. However, it is in fact not a hypercube if $\mathbb{Z}[\omega]$ is considered.}. This generalization is imperative in the sense that it allows us to show the achievable computation rates if one would use such lattices for compute-and-forward.   \color{black}

The structure of our paper is as follows. In Section \ref{Sec:Notational Convention}, we introduce the notation that will be used throughout the paper.
In Section \ref{Sec: System Model}, we present the system model that will be considered. In Section \ref{Sec: Background on lattices},
we provide some background on lattices and lattice codes. In Section \ref{cfgausinf}, we discuss Nazer and Gastpar's framework for compute-and-forward
\cite{nazer2011CF}. In Section~\ref{CFeisinf}, we discuss how lattices over Eisenstein integers can be used for compute-and-forward in Nazer and Gastpar's
framework and what properties of these lattices are required in order to achieve computation rates formulated similarly to those in \cite{nazer2011CF}. In Section~\ref{Sec: num_res}, we provide numerical results and compare the outage performance and error-correcting performance of lattices over natural integers and lattices over Eisenstein integers in compute-and-forward. In Appendix~\ref{Sec: Appendix Notation}, we introduce the notation that is
used in Appendix~\ref{sec:GfC} and Appendix~\ref{sec:GFACC}, we prove that there exist a nested pair of Eisenstein lattices
which the coarse lattice is good for covering and the fine lattice achieves the Poltyrev limit.

\subsection{Notational Convention}\label{Sec:Notational Convention}
Throughout the paper, we use $\mathbb{R}$ to denote the field of real numbers, $\mathbb{C}$ to denote the field of complex numbers,
 and $\mathbb{F}_q$ to denote a finite field of size $q$. $\mathbb{Z}$, $\mathbb{Z}[i]$, and $\mathbb{Z}[\omega]$ are used to denote
 the set of integers, Gaussian integers, and Eisenstein integers, respectively. We use underlined variables to denote vectors and
 boldface uppercase variables to denote matrices, e.g., $\underline{x}$ and $\mathbf{X}$, respectively. We denote the $i^{th}$ column
of a matrix $\mathbf{X}$ as $\mathbf{X}_i$. Also, we use superscript $H$ to denote the Hermitian operation, e.g., $\underline{x}^H$
and $\mathbf{X}^H$. We define $\log^{+}(x)\triangleq\max(\log_2(x),0)$ and denote the Euclidean  metric as $\|\cdot\|$. We denote the all zero vector in $\mathbb{R}^n$ as $\underline{0}$ and the $n\times n$ identity matrix as $\mathbf{I}$. We denote the volume of a bounded region $E\subset\mathbb{R}^n$ as $\text{Vol}\left(E\right)$ and denote the $n$-dimensional sphere of radius $r$ centered at $\underline{0}$ as $\mathcal{B}(r)\triangleq\left\{\underline{s}~:~\|\underline{s}\|\leq r \right\}$.

\section{System Model}\label{Sec: System Model}
We consider an AWGN network as shown in Fig.~\ref{fig-AWGN Network} where $L$ source nodes $S_1, S_2,\ldots, S_L$ wish to transmit information to $M$ relay nodes $D_1, D_2,\ldots, D_M$, where $M\geq L$. It is assumed that relay nodes cannot collaborate with each other and are noiselessly connected to a final destination interested in the individual messages sent from all the source nodes. The objective of the relay nodes is to facilitate communication between the source nodes and the final destination.

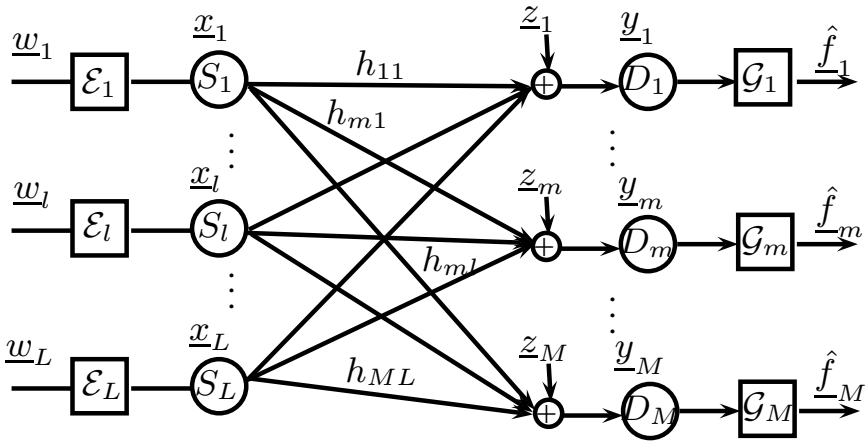
\begin{figure}[h]
\centering
\scalebox{1.5} 
{
\begin{pspicture}(0,-1.9348438)(9.887188,1.9748437)
\usefont{T1}{ptm}{m}{n}
\rput(7.6867185,-1.6398437){\footnotesize $\mathcal{G}_M$}
\psline[linewidth=0.04cm,arrowsize=0.05291667cm 2.0,arrowlength=1.4,arrowinset=0.4]{->}(6.8887496,-1.6748438)(7.4287496,-1.6748438)
\psframe[linewidth=0.04,dimen=outer](7.9287496,-1.3948437)(7.4087496,-1.9148438)
\psline[linewidth=0.04cm,arrowsize=0.05291667cm 2.0,arrowlength=1.4,arrowinset=0.4]{->}(7.94875,-1.6748438)(8.4887495,-1.6748438)
\psframe[linewidth=0.04,dimen=outer](7.9087496,0.08515645)(7.3887496,-0.43484354)
\usefont{T1}{ptm}{m}{n}
\rput(7.6567187,-0.15984355){\footnotesize $\mathcal{G}_m$}
\psline[linewidth=0.04cm,arrowsize=0.05291667cm 2.0,arrowlength=1.4,arrowinset=0.4]{->}(7.9287496,-0.19484355)(8.46875,-0.19484355)
\psline[linewidth=0.04cm,arrowsize=0.05291667cm 2.0,arrowlength=1.4,arrowinset=0.4]{->}(6.8687496,-0.19484355)(7.4087496,-0.19484355)
\usefont{T1}{ptm}{m}{n}
\rput(7.6067185,1.2801565){\footnotesize $\mathcal{G}_1$}
\psline[linewidth=0.04cm,arrowsize=0.05291667cm 2.0,arrowlength=1.4,arrowinset=0.4]{->}(7.9287496,1.2451564)(8.46875,1.2451564)
\psline[linewidth=0.04cm,arrowsize=0.05291667cm 2.0,arrowlength=1.4,arrowinset=0.4]{->}(6.8687496,1.2451564)(7.4087496,1.2451564)
\psline[linewidth=0.04cm,arrowsize=0.05291667cm 2.0,arrowlength=1.4,arrowinset=0.4]{->}(3.0487502,1.2251563)(5.5687494,1.2051563)
\usefont{T1}{ptm}{m}{n}
\rput(5.666719,0.36015648){\footnotesize $\underline{z}_m$}
\usefont{T1}{ptm}{m}{n}
\rput(5.696719,-1.1198436){\footnotesize $\underline{z}_M$}
\pscircle[linewidth=0.04,dimen=outer](5.7087493,1.2251563){0.14}
\usefont{T1}{ptm}{m}{n}
\rput(5.689843,1.2201563){\footnotesize +}
\psline[linewidth=0.04cm,arrowsize=0.05291667cm 2.0,arrowlength=1.4,arrowinset=0.4]{->}(5.828749,1.2051563)(6.3687496,1.2051563)
\psline[linewidth=0.04cm,arrowsize=0.05291667cm 2.0,arrowlength=1.4,arrowinset=0.4]{->}(5.7087493,1.6651562)(5.7287493,1.3251563)
\usefont{T1}{ptm}{m}{n}
\rput(5.6367188,1.8001562){\footnotesize $\underline{z}_1$}
\pscircle[linewidth=0.04,dimen=outer](5.7087493,-0.21484356){0.14}
\usefont{T1}{ptm}{m}{n}
\rput(5.689843,-0.21984357){\footnotesize +}
\psline[linewidth=0.04cm,arrowsize=0.05291667cm 2.0,arrowlength=1.4,arrowinset=0.4]{->}(5.828749,-0.23484357)(6.3687496,-0.23484357)
\psline[linewidth=0.04cm,arrowsize=0.05291667cm 2.0,arrowlength=1.4,arrowinset=0.4]{->}(5.7087493,0.22515646)(5.7287493,-0.11484355)
\usefont{T1}{ptm}{m}{n}
\rput(6.296719,-0.7198435){\footnotesize $\vdots$}
\pscircle[linewidth=0.04,dimen=outer](5.7287493,-1.6948433){0.14}
\usefont{T1}{ptm}{m}{n}
\rput(5.709843,-1.6998434){\footnotesize +}
\psline[linewidth=0.04cm,arrowsize=0.05291667cm 2.0,arrowlength=1.4,arrowinset=0.4]{->}(5.848749,-1.7148434)(6.3887496,-1.7148434)
\psline[linewidth=0.04cm,arrowsize=0.05291667cm 2.0,arrowlength=1.4,arrowinset=0.4]{->}(5.7287493,-1.2548436)(5.7487497,-1.5948435)
\usefont{T1}{ptm}{m}{n}
\rput(6.5267186,1.7201563){\footnotesize $\underline{y}_1$}
\usefont{T1}{ptm}{m}{n}
\rput(6.556719,0.26015642){\footnotesize $\underline{y}_m$}
\usefont{T1}{ptm}{m}{n}
\rput(6.5467186,-1.1998434){\footnotesize $\underline{y}_M$}
\usefont{T1}{ptm}{m}{n}
\rput(2.6967187,0.3801564){\footnotesize $\underline{x}_l$}
\usefont{T1}{ptm}{m}{n}
\rput(2.7367187,-1.0198437){\footnotesize $\underline{x}_L$}
\usefont{T1}{ptm}{m}{n}
\rput(4.246719,1.4001565){\footnotesize $h_{11}$}
\usefont{T1}{ptm}{m}{n}
\rput(4.866719,-0.33984354){\footnotesize $h_{ml}$}
\usefont{T1}{ptm}{m}{n}
\rput(4.2567186,-1.3198438){\footnotesize $h_{ML}$}
\usefont{T1}{ptm}{m}{n}
\rput(4.0367184,0.9601563){\footnotesize $h_{m1}$}
\usefont{T1}{ptm}{m}{n}
\rput(1.7767185,-1.4398438){\footnotesize $\mathcal{E}_L$}
\psframe[linewidth=0.04,dimen=outer](2.00875,-1.1948438)(1.4887499,-1.7148434)
\psline[linewidth=0.04cm](2.0287497,-1.4748435)(2.56875,-1.4748435)
\psline[linewidth=0.04cm](0.96874994,-1.4748435)(1.50875,-1.4748435)
\psframe[linewidth=0.04,dimen=outer](2.00875,0.20515648)(1.4887499,-0.31484354)
\usefont{T1}{ptm}{m}{n}
\rput(1.7367185,-0.03984355){\footnotesize $\mathcal{E}_l$}
\psline[linewidth=0.04cm](2.0287497,-0.07484355)(2.56875,-0.07484355)
\psline[linewidth=0.04cm](0.96874994,-0.07484355)(1.50875,-0.07484355)
\psline[linewidth=0.04cm](0.96874994,1.2451564)(1.50875,1.2451564)
\usefont{T1}{ptm}{m}{n}
\rput(1.7467185,1.2401565){\footnotesize $\mathcal{E}_1$}
\psline[linewidth=0.04cm](2.0287497,1.2451564)(2.56875,1.2451564)
\usefont{T1}{ptm}{m}{n}
\rput(2.7467186,1.7201563){\footnotesize $\underline{x}_1$}
\usefont{T1}{ptm}{m}{n}
\rput(1.1667186,1.5601563){\footnotesize $\underline{w}_1$}
\pscircle[linewidth=0.04,dimen=outer](2.8087502,1.2651565){0.26}
\usefont{T1}{ptm}{m}{n}
\rput(2.7767186,1.2801565){\footnotesize $S_1$}
\pscircle[linewidth=0.04,dimen=outer](6.6087494,1.2451564){0.26}
\usefont{T1}{ptm}{m}{n}
\rput(6.576719,1.2601564){\footnotesize $D_1$}
\pscircle[linewidth=0.04,dimen=outer](2.8087502,-0.05484355){0.26}
\usefont{T1}{ptm}{m}{n}
\rput(2.7675936,-0.03984355){\footnotesize $S_l$}
\pscircle[linewidth=0.04,dimen=outer](6.6087494,-0.19484355){0.26}
\usefont{T1}{ptm}{m}{n}
\rput(6.606719,-0.17984354){\footnotesize $D_m$}
\usefont{T1}{ptm}{m}{n}
\rput(2.9167187,-0.49984354){\footnotesize $\vdots$}
\pscircle[linewidth=0.04,dimen=outer](2.8087502,-1.4548438){0.26}
\usefont{T1}{ptm}{m}{n}
\rput(2.7867186,-1.4398438){\footnotesize $S_L$}
\pscircle[linewidth=0.04,dimen=outer](6.6287494,-1.6748438){0.26}
\usefont{T1}{ptm}{m}{n}
\rput(6.6367188,-1.6598438){\footnotesize $D_M$}
\usefont{T1}{ptm}{m}{n}
\rput(8.266719,1.5201565){\footnotesize $\underline{\hat{f}}_1$}
\psline[linewidth=0.04cm,arrowsize=0.05291667cm 2.0,arrowlength=1.4,arrowinset=0.4]{->}(3.08875,1.1851563)(5.609218,-1.665156)
\psline[linewidth=0.04cm,arrowsize=0.05291667cm 2.0,arrowlength=1.4,arrowinset=0.4]{->}(3.0728123,-0.06515615)(5.6087494,1.2051563)
\psline[linewidth=0.04cm,arrowsize=0.05291667cm 2.0,arrowlength=1.4,arrowinset=0.4]{->}(3.0928123,-0.10515615)(5.589218,-1.6851561)
\usefont{T1}{ptm}{m}{n}
\rput(6.296719,0.76015645){\footnotesize $\vdots$}
\usefont{T1}{ptm}{m}{n}
\rput(2.9167187,0.7401565){\footnotesize $\vdots$}
\psline[linewidth=0.04cm,arrowsize=0.05291667cm 2.0,arrowlength=1.4,arrowinset=0.4]{->}(3.052812,-1.4051563)(5.5687494,1.1851563)
\psline[linewidth=0.04cm,arrowsize=0.05291667cm 2.0,arrowlength=1.4,arrowinset=0.4]{->}(3.0728123,-1.3851563)(5.589218,-1.7051562)
\psline[linewidth=0.04cm,arrowsize=0.05291667cm 2.0,arrowlength=1.4,arrowinset=0.4]{->}(3.06875,1.2051563)(5.612812,-0.18515617)
\psline[linewidth=0.04cm,arrowsize=0.05291667cm 2.0,arrowlength=1.4,arrowinset=0.4]{->}(3.0487502,-0.09484355)(5.592812,-0.18515617)
\psline[linewidth=0.04cm,arrowsize=0.05291667cm 2.0,arrowlength=1.4,arrowinset=0.4]{->}(3.06875,-1.3948437)(5.612812,-0.18515617)
\usefont{T1}{ptm}{m}{n}
\rput(8.316719,0.1001564){\footnotesize $\underline{\hat{f}}_m$}
\usefont{T1}{ptm}{m}{n}
\rput(8.326719,-1.3598435){\footnotesize $\underline{\hat{f}}_M$}
\usefont{T1}{ptm}{m}{n}
\rput(1.1567186,0.2401564){\footnotesize $\underline{w}_l$}
\usefont{T1}{ptm}{m}{n}
\rput(1.1367188,-1.1598437){\footnotesize $\underline{w}_L$}
\psframe[linewidth=0.04,dimen=outer](2.00875,1.5051564)(1.4887499,0.9851565)
\psframe[linewidth=0.04,dimen=outer](7.8887496,1.5451565)(7.3687496,1.0251565)
\end{pspicture}
}\caption{The AWGN Network where $S_1, S_2,\ldots, S_L$ wish to transmit information to $D_1, D_2,\ldots, D_M$. The channel between the $S_l$ and $D_m$ is denoted as $h_{ml}$.}\label{fig-AWGN Network}
\end{figure}

We denote the information vector at the source node $S_l$ as $\underline{w}_l\in\mathbb{F}_q^{k}$. Without loss of generality, we assume that the length of the information vector at each transmitter $l$ has the same length $k$. Each transmitter is equipped with an encoder $\mathcal{E}_l:\mathbb{F}_q^k\rightarrow\mathbb{C}^n$ that maps $\underline{w}_l$ to an $n$-dimensional complex codeword $\underline{x}_l=\mathcal{E}_l\left(\underline{w}_l\right)$. Each codeword is subject to the power constraint

\begin{align}
    \mathbb{E}||\underline{x}_l||^2\leq nP.
\end{align}
The message rate $R$ of each transmitter is the length of its message in bits normalized by the number of channel uses,
\begin{align}
R=\frac{k}{n}\log{q}.
\end{align}
Due to the superposition nature of the wireless medium, each relay $m$ observes
\begin{align}\label{eq:relay_obsv}
    \underline{y}_m=\sum_{l=1}^L h_{ml}\underline{x}_l+\underline{z}_m,
\end{align}
where $h_{ml}\in\mathbb{C}$ is the channel coefficient between $D_m$ and $S_l$. \color{black} As it can be observed from \eqref{eq:relay_obsv}, it is assumed that there is no inter-symbol interference and each $h_{ml}\underline{x}_l$ arrive at the relay simultaneously. \color{black} Furthermore, $\underline{z}_m$ is an $n$-dimensional complex vector which consists of identically distributed (i.i.d.) circularly symmetric Gaussian random variables, i.e. $\underline{z}_m\sim \mathcal{CN}(0,\textbf{I})$. Let $\underline{h}_m=[h_{m1},\cdots,h_{mL}]^T$ denote the vector of channel coefficients to relay $m$ from all the source nodes. We assume that the relay $m$ only has the knowledge of the channel coefficient from each transmitter to itself, i.e., $\underline{h}_m$.


Each relay attempts to recover the linear combination $\underline{f}_m$ (over $\mathbb{F}_q$)
\begin{align}
    \underline{f}_m=\bigoplus_{l=1}^L\left( b_{ml}\underline{w}_l \right),
\end{align}
where $b_{ml}\in \mathbb{F}_q$ and let $\underline{b}_m=[b_{m1},\ldots,b_{mL}]^T$. Typically $b_{ml}$s are chosen based on the network structure and/or the channel coefficients.
It is desirable for the matrix $[\underline{b}_1,\ldots,\underline{b}_{M}]$ to be full-rank which enables each $\underline{w}_l$ to be recovered at the final destination. For each $D_m$, we define the decoder $\mathcal{G}_m:\mathbb{C}^n\rightarrow\mathbb{F}_q^k$ and $\hat{\underline{f}}_m=\mathcal{G}_m(y_m)$ is an estimate of $\underline{f}_m$. \color{black} Let $\mathcal{P}$ denote a principal ideal domain in $\mathbb{C}$ such as $\mathbb{Z}[i]$ or $\mathbb{Z}[\omega]$. \color{black}

\begin{definition}[Average probability of error]\label{averr}
Equations with coefficient vectors $\underline{a}_1,\underline{a}_2,\ldots\underline{a}_M$, where each $\underline{a}_m\in\mathcal{P}^L$, are decoded with \textit{average probability of error} $\epsilon$ if

\begin{align}
    \text{Pr}\left ( \bigcup_{m=1}^M\left\lbrace \hat{\underline{f}}_m\neq\underline{f}_m\right\rbrace \right )<\epsilon.
\end{align}

\end{definition}

\begin{definition}[Computation rate of relay $m$]\label{achcomprate}
For a given channel coefficient vector $\underline{h}_m$ and equation coefficient vector $\underline{a}_m\in\mathcal{P}^L$, the computation rate $R\left(\underline{h}_m,\underline{a}_m \right)$ is achievable at relay $m$ if for any $\epsilon>0$ and $n$ large enough, there exist encoders $\mathcal{E}_1,\ldots,\mathcal{E}_L$ and there exists a decoder $\mathcal{G}_m$ such that relay $m$ can recover its desired equation with average probability of error $\epsilon$ as long as the underlying message rate $R$ satisfies
\begin{align}
    R<R\left(\underline{h}_m,\underline{a}_m \right).
\end{align}



\end{definition}
Due to the fact that the relays cannot collaborate, each relay picks an integer vector $\underline{a}_m$ such that $R\left(\underline{h}_m,\underline{a}_m \right)$ is maximized.

\begin{definition}[Computation rate of AWGN network]\label{achcompratenet}
Given  $\mathbf{H}=[\underline{h}_1,\ldots,\underline{h}_M]$ and $\mathbf{A}=[\underline{a}_1,\ldots,\underline{a}_M]$, the achievable computation rate of an AWGN network is defined as
\begin{align}
    \mathcal{R}\left(\mathbf{H},\mathbf{A} \right)=\underset{m:a_{ml}\neq0}{\min}R\left(\underline{h}_m,\underline{a}_m \right),
\end{align}
\color{black}provided that the matrix $\sigma\left(\mathbf{A}\right)=[\underline{b}_1,\ldots,\underline{b}_{M}]\in\mathbb{F}_q^{L\times M}$, where $\sigma:\mathcal{P}^{L\times M}\rightarrow\mathbb{F}_q^{L\times M}$, is full rank. \color{black} If $[\underline{b}_1,\ldots,\underline{b}_{M}]$ is not full rank, $\mathcal{R}\left(\mathbf{H},\mathbf{A} \right)=0$.

%
\end{definition}
{\black Note that in this paper, our coding scheme particular considers the ring of Eisenstein integers, i.e., $\mathcal{P}=\mathbb{Z}[\omega]$, for the reason that will become clear later.}

\section{Background on Lattices}\label{Sec: Background on lattices}
Due to the fact that the coding scheme that will be considered heavily relies on lattices, we now provide some background knowledge
on lattices. For more details on lattices, please refer to \cite{conway1999sphere}, \cite{erez2005lattices}, and \cite{erez2004achieving}.

\begin{definition}[Lattice over $\mathbb{Z}$]
An $n$-dimensional \textit{lattice over natural integers}, $\Lambda^{(n)}$, is a discrete set of points in $\mathbb{R}^n$ such that $\Lambda^{(n)}$ is a discrete additive subgroup of $\mathbb{R}^n$ with rank $k$ where $k\leq n$. Such a lattice can be generated via a full rank generator matrix $\mathbf{B}\in\mathbb{R}^{n\times k}$
\begin{align}
    \Lambda^{(n)}=\left\{\underline{\lambda}=\mathbf{B}\underline{e}:\underline{e}\in\mathbb{Z}^k \right\}.
\end{align}
For notational convenience, we shall drop the superscript in $\Lambda^{(n)}$ in this paper and denote
$n$-dimensional lattices as $\Lambda$. Also, we refer to lattices over integers as $\mathbb{Z}$-lattices throughout the paper.
\end{definition}

\color{black}Given a lattice $\Lambda$, we denote the \emph{quantizer} operation with respect to $\Lambda$ as $Q_{\Lambda}$, the \emph{modulus} operation with respect to $\Lambda$ as $\mod\Lambda$, and the \emph{fundamental Voronoi region} of $\Lambda$ as $\mathcal{V}_\Lambda$. We denote the \emph{covering radius} and \emph{effective radius} of $\Lambda$ as $r_{\Lambda}^{\text{cov}}$ and $r_{\Lambda}^{\text{eff}}$, respectively. We denote the \emph{second moment} and \emph{normalized second moment} of $\Lambda$ as $\sigma_\Lambda^2$ and $G\left(\Lambda\right)$, respectively. We refer the readers to \cite{conway1999sphere} for these definitions.\color{black}

 \begin{definition}[Goodness for covering]
 A sequence of lattices $\Lambda$ is \emph{good for covering} if
 \begin{align}
    \underset{n\rightarrow\infty}{\lim}\frac{r_{\Lambda}^{{\text{cov}}}}{r_{\Lambda}^{{\text{eff}}}}=1.
 \end{align}
 These lattices are also commonly referred to as \emph{Rogers good}, since it was first shown by Rogers that such lattices exist \cite{Rogers}.
 \end{definition}

 \begin{definition}[Goodness for quantization]
 A sequence of lattices $\Lambda$ is \emph{good for quantization} if
 \begin{align}
 \underset{n\rightarrow\infty}{\lim}G\left(\Lambda\right)=\frac{1}{2\pi e}.
 \end{align}
 In other words, the normalized second moment of $\Lambda$ converges to a sphere's normalized second moment as $n\rightarrow\infty$. Zamir \emph{et al.}, have shown that such a sequence of lattices exist \cite{Zamir-Feder1996Quan}. Erez \emph{et al.} have also shown the existence of such a sequence of lattices and proved that goodness for covering implies goodness for quantization \cite{erez2005lattices}.
 \end{definition}

 \begin{definition}[Lattices that achieve the Poltyrev limit]
    Let $\underline{z}$ be an $n$-dimensional independent and identically distributed (i.i.d) Gaussian vector, $\underline{z}\sim\mathcal{N}\left( \underline{0},\theta_{{\underline{z}}}^2\mathbf{I}\right)$. The \emph{effective radius} of $\underline{z}$, which we denote as $r_{\underline{z}}$, is defined as
    \begin{align}
    r_{\underline{z}}=\sqrt{n\theta_{{\underline{z}}}^2}.
    \end{align}
   Consider a $\mathbb{Z}$-lattice $\Lambda$ and a lattice point $\underline{\lambda}\in\Lambda$, which is transmitted across an AWGN channel:
   \begin{align}
   \underline{y}=\underline{\lambda}+\underline{z}.
   \end{align}
    The maximum likelihood decoder would decode to the lattice point nearest in Euclidean distance to $\underline{y}$. Therefore, an error would occur only if $\underline{y}$ leaves the Voronoi region of $\underline{\lambda}$. Due to lattice symmetry, this is equivalent to $\underline{z}$ leaving the fundamental Voronoi region $\mathcal{V}_\Lambda$.
    \begin{align}
    P_e\left(\Lambda,r_{\underline{z}}\right)=\text{Pr}\left\{ \underline{z}\not\in\mathcal{V}_{\Lambda}\right\},
    \end{align}
    where $P_e\left(\Lambda,r_{\underline{z}}\right)$ denotes the probability of error.

    A sequence of $\mathbb{Z}$-lattices $\Lambda$ are \emph{good for AWGN channel coding} if for any $r_{\underline{z}}<r_{\Lambda}^{\text{eff}}$, $\underset{n\rightarrow\infty}\lim{P_e}\left(\Lambda,r_{\underline{z}}\right)=0$ and this decay may be bounded exponentially in $n$. Erez \textit{et. al.} have shown the existence of such a sequence of lattices in \cite{erez2005lattices} and they have referred to them as \emph{Poltyrev good}.

    Nonetheless, in order to achieve the Poltyrev capacity in the unconstrained AWGN channel, it is sufficient for $\lim\underset{n\rightarrow\infty}{P_e}\left(\Lambda,r_{\underline{z}}\right)=0$ for any $r_{\underline{z}}<r_{\Lambda}^{\text{eff}}$, i.e., $P_e\left(\Lambda,r_{\underline{z}}\right)$ does not need
to decay exponentially as $n\rightarrow\infty$. We refer to such a sequence of lattices as \emph{lattices that achieve the Poltyrev limit} in this paper. Loeliger has shown the existence of such lattices in \cite{loeliger}.

 \end{definition}


 \begin{definition}[Sublattice]
 A $\mathbb{Z}$-lattice $\Lambda$ is a sublattice of (nested in) another $\mathbb{Z}$-lattice $\Lambda_f$ if $\Lambda\subseteq\Lambda_f$. $\Lambda$ is referred to as the \emph{coarse lattice} and $\Lambda_f$ is referred to as the \emph{fine lattice}. The quotient group $\Lambda_f/\Lambda$ is referred to as a lattice partition \cite{forney}.
 \end{definition}

\begin{definition}[Nesting ratio]
Given a pair of $n$-dimensional nested lattices $\Lambda\subset\Lambda_f$, the \textit{nesting ratio} $\vartheta$ is defined as,
\begin{align}
\vartheta=\left(\frac{\text{Vol}(\mathcal{V}_\Lambda)}{\text{Vol}(\mathcal{V}_{\Lambda_f})}\right)^{\frac{1}{n}}.
\end{align}

\end{definition}

 \begin{definition}[Nested Lattice Code]
 Given a fine $\mathbb{Z}$-lattice $\Lambda_f$ and a coarse $\mathbb{Z}$-lattice $\Lambda$, where $\Lambda\subseteq\Lambda_f$,  a \emph{nested lattice code} (Voronoi code), which we refer to as $\mathcal{L}$, is the set of all coset leaders in $\Lambda_f$ that lie in the fundamental Voronoi region of the coarse lattice $\Lambda$ \cite{Voronoi_Codes}:
 \begin{align}
 \mathcal{L}=\mathcal{V}_\Lambda\cap\Lambda_f=\left\{\underline{\lambda}_f~:~Q_{\Lambda}\left({\underline{\lambda}_f}\right)=\underline{0},\underline{\lambda}_f\in\Lambda_f \right\}.
 \end{align}
 In other words, $\mathcal{L}$ is a set of coset representatives of the quotient group $\Lambda_f/\Lambda$.

The \textit{coding rate} of a nested lattice code, denoted as $R$ is defined as,
\begin{align}
R=\log{\vartheta}.
\end{align}
\end{definition}

    \subsection{Construction A for $\mathbb{Z}$-lattices}

    One way to construct $\mathbb{Z}$-lattices is to use the following procedure, which is referred to as \emph{Construction A} \cite{Constr_A_Paper}:

    Let $q$ be a natural prime and $k,n$ be integers such that $k\leq n$. Then, let $\mathbf{{G}}\in\mathbb{F}_q^{n\times k} $.

    \begin{enumerate}
    \item
    Define the discrete codebook $\mathcal{C}=\{\underline{x}=\mathbf{G}\underline{y}:\underline{y}\in\mathbb{F}_q^k\}$ where all operations are over $\mathbb{F}_q$. Thus, $\underline{x}\in\mathbb{F}_q^n$.
    \item
    Generate the  $\mathbb{Z}$-lattice $\Lambda_{\mathcal{C}}$ as $\Lambda_{\mathcal{C}}\triangleq\{\underline{\lambda}\in\mathbb{Z}^n:\underline{\lambda}\mod{q}\in \mathcal{C}\}$, where the $\mod$ operation is applied to each component of $\underline{\lambda}$.
    \item
    Scale $\Lambda_{\mathcal{C}}$ with ${q}^{-1}$ to obtain $\Lambda={q}^{-1}\Lambda_{\mathcal{C}}$.
    \end{enumerate}
    We would like to note that only the first two steps that we have stated in Construction A  is required to build a lattice, since the third step simply scales the lattice.
    However when Erez \textit{et. al.} prove the existence of lattices built with Construction A that are good for covering in \cite{erez2005lattices}, they keep $r_\Lambda^{\text{eff}}$
    approximately constant as $n\rightarrow\infty$ and $q\rightarrow\infty$, which is possible only if the third step is used for scaling the lattice.

    \subsection{Nested $\mathbb{Z}$-lattices obtained from Construction-A \cite{erez2004achieving}}\label{sec: nestedZ}

    Let $\Lambda$ be an $n$-dimensional $\mathbb{Z}$-lattice obtained through Construction-A with a corresponding generator matrix $\mathbf{B}$. For
    a given $\mathbf{G}\in\mathbb{F}_q^{n\times k}$, denote $\Lambda^{\prime}$ as the corresponding $\mathbb{Z}$-lattice obtained through Construction-A
    using $\mathbf{G}$ as the generator matrix of the underlying linear code. Generate the $\mathbb{Z}$-lattice $\Lambda_f$ as $\Lambda_f=\mathbf{B}\Lambda^{\prime}$.
    It can be observed that $\Lambda\subset\Lambda_f$ with a coding rate of $\frac{k}{n}\log{q}$.



\section{Compute-and-Forward with $\mathbb{Z}$-lattices}\label{cfgausinf}
One way to implement network coding for the system model considered in this paper is for each relay to decode to $\underline{w}_l$ individually, then form $\underline{f}_m$ and forward it through the network, which is commonly referred as
decode-and-forward. As the number of source nodes $L$ increase, decode-and-forward is limited by self-interference since other transmitted messages are treated as noise
when decoding to $\underline{w}_l$ individually. Therefore, one way to mitigate the effect of self-interference would be for relay $m$ to directly decode to $\underline{f}_m$ from $\underline{y}_m$
instead of decoding to $\underline{w}_l$'s individually. Such an approach is commonly referred to as compute-and-forward, which was introduced by Nazer and Gastpar in \cite{nazer2011CF} and results in achieving substantially higher rates than other forwarding paradigms such as amplify-and-forward, decode-and-forward, compress-and-forward in many situations.


In \cite{nazer2011CF}, Nazer and Gastpar use nested lattice codes to implement the compute-and-forward paradigm. Since lattices are closed under integer combinations, the relays attempt to decode to a linear combination of codewords with integer coefficients. This can then be shown to correspond to decoding linear combinations over the finite field. We briefly discuss how lattice codes are constructed to implement the compute-and-forward paradigm in \cite{nazer2011CF}.

A fine $\mathbb{Z}$-lattice $\Lambda_f$ and a coarse $\mathbb{Z}$-lattice $\Lambda$  nested in $\Lambda_f$, is constructed as mentioned in Section~\ref{sec: nestedZ} with a coding rate $R=\frac{k}{n}\log{q}$. If $\Lambda$ is simultaneously good for covering and good for AWGN channel coding, it follows that $\Lambda_f$ is good for AWGN channel coding \cite{erez2004achieving}. Both $\Lambda$ and $\Lambda_f$ are scaled such that $\sigma^2_\Lambda=P/2$. Following this, the lattice codebook $\Lambda_f\cap\mathcal{V}_\Lambda$ is constructed.

Source node $l$ partitions its information vector $\underline{w}_l\in\mathbb{F}_q^{2k}$ into $\underline{w}_l^R,\underline{w}_l^I\in\mathbb{F}_q^k$, and maps them to lattice codewords $\underline{t}_l^R,\underline{t}_l^I\in\Lambda_f\cap\mathcal{V}$, respectively, via a bijective mapping $\tilde{\psi}$,
\begin{align}
    \tilde{\psi}(\underline{w})=\left[\mathbf{B}q^{-1}g(\mathbf{G}\underline{w})\right],
\end{align}
where $\underline{w}\in\mathbb{F}_q^k$, and $g$ is the trivial bijective mapping between $\{0,1,\cdots,q-1\}$ and $\mathbb{F}_q$. Hence, $\underline{t}_l^R=\tilde{\psi}\left(\underline{w}_l^R\right),\underline{t}_l^I=\tilde{\psi}\left(\underline{w}_l^I\right)$. It then constructs dither vectors $\underline{d}_l^R,\underline{d}_l^I$, which are uniformly distributed within $\mathcal{V}$ and subtracts these dither vectors from the lattice codewords $\underline{t}_l^R,\underline{t}_l^I$, respectively, and transmits the following:
\begin{align}
    \underline{x}_l=\left(\left[\underline{t}_l^R-\underline{d}_l^R\right]\mod{\Lambda}\right)+j\left(\left[\underline{t}_l^I-\underline{d}_l^I\right]\mod{\Lambda}\right).
\end{align}
Recall that given a channel coefficient vector $\underline{h}_m\in\mathbb{C}^L$, relay $m$ observes
\begin{align}
    \underline{y}_m=\sum_{l=1}^L h_{ml}\underline{x}_l+\underline{z}_m.
\end{align}
The relay approximates $\underline{h}_m$, in some sense, by a Gaussian integer vector $\underline{a}_m\in\mathbb{Z}[i]^L$ and its goal will be to recover the following:
\begin{align}
\underline{v}_m^R=\left[\sum_{l=1}^L \left[\Re\left(a_{ml}\right)\underline{t}_l^R-\Im\left( a_{ml}\right)\underline{t}_l^I\right] \right]\mod{\Lambda},\\
\underline{v}_m^I=\left[\sum_{l=1}^L \left[\Im\left(a_{ml}\right)\underline{t}_l^R+\Re\left( a_{ml}\right)\underline{t}_l^I\right] \right]\mod{\Lambda}.
\end{align}
It proceeds by removing the dithers and scaling the observation with $\alpha_m$ and therefore,
\begin{align}\label{eq: gaus_mmse1}
\underline{\tilde{y}}_m^R&=\Re\left(\alpha_m\underline{y}_m\right)+\sum_{l=1}^{L}\Re\left(a_{ml}\right)\underline{d}_l^R-\Im\left(a_{ml}\right)\underline{d}_l^I\nonumber\\
                         &=\underline{v}_m^R+\underline{z}_{eq,m}^R,
\end{align}
and
\begin{align}\label{eq: gaus_mmse2}
\underline{\tilde{y}}_m^I&=\Im\left(\alpha_m\underline{y}_m\right)+\sum_{l=1}^{L}\Im\left(a_{ml}\right)\underline{d}_l^R+\Re\left(a_{ml}\right)\underline{d}_l^I\nonumber\\
&=\underline{v}_m^I+\underline{z}_{eq,m}^I,
\end{align}
where $\alpha_m$ is the MMSE scaling coefficient that minimizes the variance of $\underline{z}_{eq,m}^R+j\underline{z}_{eq,m}^I$. The relay quantizes $\underline{\tilde{y}}_m^I,\underline{\tilde{y}}_m^R$ to the closest lattice points in the fine lattice $\Lambda_f$ modulo the coarse lattice $\Lambda$ and estimates the following:
\begin{align}
\underline{\hat{v}}_m^R=\left[Q\left( \tilde{\underline{y}}_m^R \right) \right]\mod{\Lambda},\\
\underline{\hat{v}}_m^I=\left[Q\left( \tilde{\underline{y}}_m^I \right) \right]\mod{\Lambda},
\end{align}
where $Q$ denotes the quantization with respect to $\Lambda_f$. Finally, the relay maps $\underline{\hat{v}}_m^R$ and $\underline{\hat{v}}_m^I$ to $\underline{\hat{f}}_m^R$ and $\underline{\hat{f}}_m^I$, respectively, via $\tilde\psi^{-1}$,
\begin{align}
\tilde\psi^{-1}(\underline{v})=\left(\mathbf{G}^{T}\mathbf{G}\right)^{-1}\mathbf{G}^{T}g^{-1}\left(q\left(\left[\mathbf{B}^{-1}\underline{v}\mod{\Lambda}\right]\right)\right),
\end{align}
where $\underline{v}\in\mathbb{F}_q^n$. Hence,

\begin{align}
\tilde\psi^{-1}\left(\underline{\hat{v}}_m^R\right)&=\underline{\hat{f}}_m^R=\bigoplus_{l=1}^L\left( b_{ml}^R\underline{\hat{w}}_l^R\oplus\left(-b_{ml}^I \right)\underline{\hat{w}}_l^I \right), \\
\tilde\psi^{-1}\left(\underline{\hat{v}}_m^I\right)&=\underline{\hat{f}}_m^I=\bigoplus_{l=1}^L\left( b_{ml}^I\underline{\hat{w}}_l^R\oplus\left(b_{ml}^R \right)\underline{\hat{w}}_l^I \right),
\end{align}
where
\begin{align}
b_{ml}^R&=\Re\left(a_{ml}\right)\mod{q},\\
b_{ml}^I&=\Im\left(a_{ml}\right)\mod{q}.
\end{align}
Note that both $[\underline{b}_1^R,\ldots,\underline{b}_{M}^R]$ and $[\underline{b}_1^I,\ldots,\underline{b}_{M}^I]$ are required to be full rank so that decoding each $\underline{w}_l^R,\underline{w}_l^I$ at the final destination is feasible.

In \cite{nazer2011CF}, Nazer and Gastpar show the following theorem using the coding scheme we have described in this section.

\begin{theorem}[Nazer and Gastpar]\label{thm:Nazer}
At relay $m$, given $\underline{h}_m\in\mathbb{C}^L$ and ${\underline{a}}_m\in\mathbb{Z}[i]^L$, a computation rate of
\begin{align}\label{cfrate}
    \mathcal{R}(\underline{h}_m,\underline{{a}}_m)=\log^{+}\left(\left(\|\underline{{a}}_m\|^{2}-
    \frac{P|\underline{h}_m^{H}\underline{{a}}_m|^2}{1+P\|\underline{h}_m\|^2}\right)^{-1}\right),
\end{align}
is achievable.
\end{theorem}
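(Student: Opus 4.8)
The plan is to reduce the multi-user computation problem to a single point-to-point modulo-lattice channel, exactly as set up in \eqref{eq: gaus_mmse1}--\eqref{eq: gaus_mmse2}, and then to invoke the goodness of the nested pair $\Lambda\subset\Lambda_f$ to bound the decoding error. First I would invoke the Crypto Lemma for the dithering: since each $\underline{d}_l^R,\underline{d}_l^I$ is uniform over $\mathcal{V}_\Lambda$, the transmitted $\underline{x}_l$ is uniform over $\mathcal{V}_\Lambda$ and independent of the codewords $\underline{t}_l^R,\underline{t}_l^I$. This pins the per-symbol transmit power at $\sigma_\Lambda^2$ in each of the real and imaginary components, hence $P$ per complex symbol once $\Lambda$ is scaled to $\sigma_\Lambda^2=P/2$, and, crucially, it makes the effective noise independent of the desired combination $\underline{v}_m^R,\underline{v}_m^I$.

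Collecting terms in \eqref{eq: gaus_mmse1}--\eqref{eq: gaus_mmse2}, the effective noise in complex form is $\underline{z}_{eq,m}=\sum_{l=1}^L(\alpha_m h_{ml}-a_{ml})\underline{x}_l+\alpha_m\underline{z}_m$, whose variance per complex dimension is
\[
\sigma_{eq}^2(\alpha_m)=|\alpha_m|^2+P\,\|\alpha_m\underline{h}_m-\underline{a}_m\|^2 .
\]
This is a quadratic in $\alpha_m$, and minimizing it gives the MMSE coefficient $\alpha_m=\frac{P\underline{h}_m^H\underline{a}_m}{1+P\|\underline{h}_m\|^2}$; back-substitution yields
\[
\sigma_{eq}^2=P\!\left(\|\underline{a}_m\|^2-\frac{P|\underline{h}_m^H\underline{a}_m|^2}{1+P\|\underline{h}_m\|^2}\right).
\]

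Next I would convert reliable recovery of $\underline{v}_m$ into a rate condition. The decoder quantizes $\underline{\tilde{y}}_m^R,\underline{\tilde{y}}_m^I$ to $\Lambda_f$ and reduces modulo $\Lambda$, so an error occurs only if $\underline{z}_{eq,m}$ leaves $\mathcal{V}_{\Lambda_f}$. Because $\Lambda$ is good for covering it is also good for quantization, so $G(\Lambda)\to\frac{1}{2\pi e}$ fixes $\text{Vol}(\mathcal{V}_\Lambda)$ in terms of $\sigma_\Lambda^2=P/2$; because $\Lambda_f$ achieves the Poltyrev limit, the escape probability vanishes as $n\to\infty$ whenever the effective radius $\sqrt{n\sigma_{eq}^2/2}$ of the noise stays below $r_{\Lambda_f}^{\text{eff}}$. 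Writing this inequality as a constraint on the volume ratio $\text{Vol}(\mathcal{V}_\Lambda)/\text{Vol}(\mathcal{V}_{\Lambda_f})$ and recalling $R=\log\vartheta$ turns it into $R<\log^{+}(P/\sigma_{eq}^2)$, which is exactly $\mathcal{R}(\underline{h}_m,\underline{a}_m)$. A full-rank $\sigma(\mathbf{A})$ then lets the destination invert the recovered equations, as noted after Definition~\ref{achcompratenet}.

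The hard part will be the last step, because $\underline{z}_{eq,m}$ is not Gaussian: it is a sum of $L$ terms uniform over $\mathcal{V}_\Lambda$ plus one Gaussian term, whereas the Poltyrev-limit property is stated for Gaussian noise. The remedy I would follow is to show this mixed noise is dominated, for the purpose of the error bound, by an i.i.d.\ Gaussian of variance $\sigma_{eq}^2+\delta$ with $\delta\to0$: conditioning on the dithers, one uses the near-spherical shape of $\mathcal{V}_\Lambda$ guaranteed by covering-goodness to argue the uniform components behave isotropically, and then a smoothing/typicality argument bounds the true error probability by the AWGN error probability at the slightly inflated variance. Keeping $\sigma_\Lambda^2$ pinned at $P/2$ while letting the nesting ratio track $P/\sigma_{eq}^2$, and controlling this Gaussian domination uniformly in $n$, is the crux of the argument.
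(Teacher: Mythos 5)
Your proposal is correct and follows essentially the same route as the paper, which itself presents this result as Nazer and Gastpar's theorem built on exactly the scheme you describe: dithered nested lattice codes, MMSE scaling giving $\sigma_{eq}^2=P\bigl(\|\underline{a}_m\|^2-\frac{P|\underline{h}_m^H\underline{a}_m|^2}{1+P\|\underline{h}_m\|^2}\bigr)$, quantization to $\Lambda_f$ modulo $\Lambda$, and the Gaussian-domination step for the non-Gaussian effective noise (the paper invokes this as Lemma 8 of Nazer--Gastpar in its proof of the Eisenstein analogue, Theorem~\ref{thm:Eisen}). The crux you flag in your final paragraph is precisely that cited lemma, so your sketch matches the paper's argument in both structure and substance.
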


Given $\mathbf{H}$ and assuming that the relays do not cooperate with each other, each relay would attempt to pick an integer vector $\underline{a}_m$
that maximizes its individual computation rate, i.e. $\underline{a}_m=\underset{\underline{a}\in\mathbb{Z}[i]^L}{\arg\max}~\mathcal{R}(\underline{h}_m,\underline{{a}}_m)$
in order to maximize $\mathcal{R}\left(\mathbf{H},\mathbf{A} \right)$.

\section{Compute-and-Forward with Lattices over Eisenstein Integers}\label{CFeisinf}
The main result in this section is that for some channel realizations, higher information rates than those in Theorem~\ref{thm:Nazer} are achievable. The improved information rate is obtained by considering nested lattices over Eisenstein integers which allow the $m$th relay to decode a linear combination of the form $\sum_{l=1}^{L}a_{ml} \underline{t}_l$, where $a_{ml} \in \mathbb{Z}[\omega]$. This result is made precise in Theorem~\ref{thm:Eisen}.

%

    One of the key challenges in proving this achievability result is to show the existence of nested lattices over Eisenstein integers, which we refer to as $\mathbb{Z}[\omega]$-lattices, where the coarse lattice is good for covering and the fine lattice can achieve the Poltyrev limit. We would like to note that, we do not prove the existence of $\mathbb{Z}[\omega]$-lattices that are good for AWGN channel coding, i.e. lattices for which the error probability can be bounded exponentially in $n$, in this paper. Furthermore, we do not require the coarse lattice in the sequence of nested lattices to be simultaneously good
 for AWGN channel coding and good for covering. In order to state our main theorem, it suffices to show the existence of nested $\mathbb{Z}[\omega]$-lattices where the coarse lattice
is good for covering and the fine lattice can achieve the Poltyrev limit. A similar result is obtained in \cite{Ordentlich}, where the coarse lattice is chosen to be good only for quantization and the fine
lattice to be good for AWGN channel coding in order to achieve $\frac{1}{2}\log(1+SNR)$ using lattice codes for the point-to-point AWGN channel.

    In what follows, we first provide some preliminaries about Eisenstein integers and summarize Construction A for $\mathbb{Z}[\omega]$-lattices. Afterwards, we show that nested $\mathbb{Z}[\omega]$-lattices where the coarse lattice is good for quantization and the fine lattice achieves the Poltyrev limit can be obtained through Construction A. The existence result can then be used to prove Theorem~\ref{thm:Eisen}, which is the main result of this paper. Since $\mathbb{Z}[\omega]$ quantizes $\mathbb{C}$ better than $\mathbb{Z}[i]$, on the average (over the channel realizations), higher information rates are achievable by using $\mathbb{Z}[\omega]$-lattices compared to using $\mathbb{Z}$-lattices. The superiority of the proposed scheme will be further confirmed in Section~\ref{Sec: num_res} where we provide numerical results to compare the outage performance and error-correcting performance of lattices over natural integers and lattices over Eisenstein integers in compute-and-forward.
\subsection{Preliminaries: Eisenstein Integers}

An Eisenstein  integer is a complex number of the form $a+b\omega$ where $a,b\in\mathbb{Z}$ and $\omega=-\frac{1}{2}+j\frac{\sqrt{3}}{2}$. The ring of Eisenstein integers $\mathbb{Z}[\omega]$ is a principal ideal domain, i.e, a commutative ring without zero divisors where every ideal can be generated by a single element. Other well-known principal ideal domains are $\mathbb{Z}$ and $\mathbb{Z}[i]$. A $unit$ in $\mathbb{Z}[\omega]$ is one of the following:$\{ \pm 1, \pm \omega, \pm \omega^2 \}$. An Eisenstein integer $\varrho$ is an Eisenstein prime if either one of the following mutually exclusive conditions hold \cite{quarternions}:
\begin{enumerate}
\item
$\varrho$ is equal to the product of a unit and any natural prime congruent to $2\mod 3$.
\item
$|\varrho|^2=3$ or $|\varrho|^2$ is  any natural prime congruent to $1\mod 3$.
\end{enumerate}
An $n$-dimensional $\mathbb{Z}[\omega]$-lattice can be written in terms of a complex lattice generator matrix $\mathbf{B}\in\mathbb{C}^{n \times k}$:
\begin{align}
\Lambda=\{ \underline{\lambda}=\mathbf{B}\underline{e}:\underline{e}\in \mathbb{Z}[\omega]^k \}
\end{align}

\subsection{Construction A for $\mathbb{Z}[\omega]$-lattices}\label{sec:ConstrAEis}
{\black Let $\varrho$ be an Eisenstein prime with $|\varrho|^2=q$. Since $\mathbb{Z}[\omega]$ is a principal ideal domain, $\varrho \mathbb{Z}[\omega]$ is an ideal of $\mathbb{Z}[\omega]$ and together they form the quotient ring $\mathbb{Z}[\omega]/\varrho \mathbb{Z}[\omega]$. Moreover, since $\varrho$ is an Eisenstein prime, $\varrho \mathbb{Z}[\omega]$ is a prime ideal and hence a maximal ideal (a property for principal ideal domains). Thus, the quotient ring is isomorphic to a field
\begin{equation}
    \mathbb{Z}[\omega]/\varrho \mathbb{Z}[\omega] \cong \mathbb{F}_{q}.
\end{equation}
i.e., there exists a ring isomorphism $\sigma:\mathbb{Z}[\omega]/\varrho \mathbb{Z}[\omega]\rightarrow\mathbb{F}_q$ \cite[page 118]{Algebra_Book}. Note that $\mathbb{Z}[\omega]$ is the union of $q$ cosets of $\varrho \mathbb{Z}[\omega]$
  \begin{align}
  \mathbb{Z}[\omega]=\underset{s\in\mathcal{S}}{\cup}\left(\varrho\mathbb{Z}[\omega]+s\right)
  \end{align}
where $\mathcal{S}$ represents the set of $q$ coset leaders of $\mathbb{Z}[\omega]/\varrho \mathbb{Z}[\omega]$. One has the canonical ring homomorphism \cite[page 118]{Algebra_Book} $\mod{\varrho\mathbb{Z}[\omega]}:\mathbb{Z}[\omega]\rightarrow\mathbb{Z}[\omega]/\varrho \mathbb{Z}[\omega]$ to homomorphically map an element in $\mathbb{Z}[\omega]$ to its coset leader. Now composing $\mod{\varrho\mathbb{Z}[\omega]}$ and $\sigma$, one obtains the ring homomorphism $\tilde{\sigma}\triangleq\sigma\circ\mod{\varrho\Lambda}:\mathbb{Z}[\omega]\rightarrow\mathbb{F}_q$.}
Note that $\tilde{\sigma}$ can be extended to vectors in a straightforward manner by mapping the elements of the vector componentwise to another vector \cite[page 197]{conway1999sphere}. \color{black}We would like to mention that the aforementioned properties also hold for lattices that are constructed over any other principal ideal domain such as $\mathbb{Z}$ or $\mathbb{Z}[i]$. For example, the $\mod{q}$ operation in Construction A for $\mathbb{Z}$-lattices also provides a ring homomorphism.\color{black} We now define Construction A for $\mathbb{Z}[\omega]$-lattices as follows.

Let $\varrho$ be an Eisenstein prime and $q=|\varrho|^2$. Note that $q$ is either a natural prime or the square of a natural prime. Also let $k,n$ be integers such that $k\leq n$ and let $\mathbf{{G}}\in\mathbb{F}_q^{n\times k} $. Similar to a $\mathbb{Z}$-lattice, a  $\mathbb{Z}[\omega]$-lattice can be obtained by Construction A \cite{conway1999sphere}.


\begin{enumerate}
\item
Define the discrete codebook $\mathcal{C}=\{\underline{x}=\mathbf{G}\underline{y}:\underline{y}\in\mathbb{F}_q^k\}$ where all operations are over $\mathbb{F}_q$. Thus, $\underline{x}\in\mathbb{F}_q^n$.
\item
Generate the $n$-dimensional $\mathbb{Z}[\omega]$-lattice $\Lambda_{\mathcal{C}}$ as $\Lambda_{\mathcal{C}}\triangleq\{\lambda\in\mathbb{Z}[\omega]^n:\tilde\sigma(\lambda)\in \mathcal{C}\}$.
\item
Scale $\Lambda_{\mathcal{C}}$ with ${\varrho}^{-1}$ to obtain $\Lambda={\varrho}^{-1}\Lambda_{\mathcal{C}}$.
\end{enumerate}
Once again, we would like to note that only the first two steps that we have stated in Construction A is required to build a $\mathbb{Z}[\omega]$-lattice. However,due to the fact that we will prove the existence of $\mathbb{Z}[\omega]$-lattices that are good for covering in this paper using similar proof techniques in \cite{erez2005lattices}, we also require the third step which scales the lattice. An example of such a construction with $k=1,n=1, \mathbf{G}=[1]$, $\varrho=2-\sqrt{3}j$, $q=7$ and the corresponding ring homomorphism is shown in Fig.~\ref{fig-constra1}. \color{black} In this figure, the green circles represent $\varrho\mathbb{Z}[\omega]$ and the red lines represent the boundaries of their Voronoi regions. It can be observed that there are exactly $q=|\varrho|^2=7$ lattice points that belong to $\mathbb{Z}[\omega]$ that lie within each Voronoi region of the lattice points that belong to $\varrho\mathbb{Z}[\omega]$. It can also be verified that the mapping (labeling) in Fig.~\ref{fig-constra1} from $\mathbb{Z}[\omega]/\varrho\mathbb{Z}[\omega]$ to $\mathbb{F}_q$ , i.e., $\tilde\sigma$ is indeed a ring homomorphism. \color{black} We would like to note that the lattice in  Fig.~\ref{fig-constra1} is trivially $\mathbb{Z}[\omega]$. Unfortunately, we were not able to provide a less trivial figure with a larger dimensional $\mathbb{Z}[\omega]$-lattice. This is due to the fact that even a two-dimensional $\mathbb{Z}[\omega]$-lattice requires four real dimensions to be drawn, which is not feasible.
\begin{figure}[h]
\centering
\centerline{\includegraphics[width=4.5in]{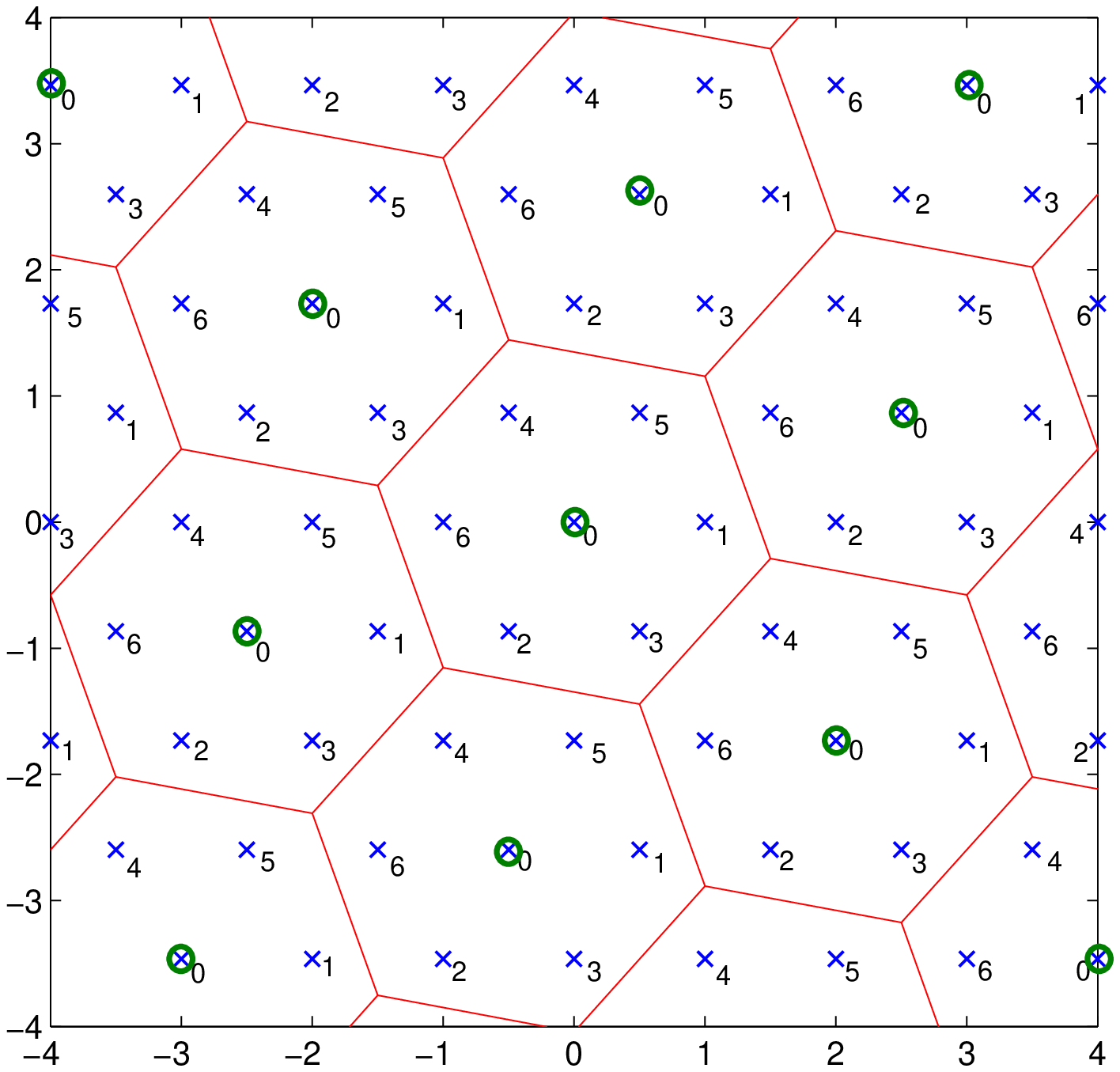}} \caption{ $\Lambda_{\mathcal{C}}$ with $\mathbf{G}=[1]$ and the corresponding ring homomorphism}\label{fig-constra1}
\end{figure}

Given $n,k,q$, we define an $(n,k,q,\mathbb{Z}[\omega])$ ensemble as the set of $\mathbb{Z}[\omega]$-lattices obtained through Construction-A
where for each of these lattices, $\mathbf{G}_{ij}$ are i.i.d with a uniform distribution over $\mathbb{F}_q$.

\begin{theorem}\label{covering}
    A lattice $\Lambda$ drawn from an $(n,k,q,\mathbb{Z}[\omega])$ ensemble, where $k< n$ but grows faster than $\log^2 n$, $q$ is a
natural prime congruent to $1\mod{3}$, and where $k, q$ satisfy
    \begin{align}\label{qkncond11}
        q^k&=\frac{\left(\frac{\sqrt{3}}{2}\right)^n}{V_{\mathcal{B}}\left({r_{\Lambda}^{\text{eff}}} \right)}=\frac{\left(\frac{\sqrt{3}}{2}\right)^n\Gamma\left(n+1\right)}{\pi^n\left(r_{\Lambda}^{\text{eff}} \right)^{2n}}\nonumber\\
        &\approx\sqrt{2n\pi}\left(\frac{\sqrt{3}}{2}\right)^n\left(\frac{2n}{2\exp(1)\left(r_{\Lambda}^{\text{eff}} \right)^{2}}\right)^n,
    \end{align}
    and
    \begin{align}\label{qkncond12}
        r_{min}<r_{\Lambda}^{\text{eff}}< 2 r_{min},
    \end{align}
    where $0<r_{min}<\frac{1}{4}$,
    is good for covering, i.e,
    \begin{align}
        \frac{r_{\Lambda^{{\text{cov}}}}}{r_{\Lambda}^{{\text{eff}}}}\rightarrow 1,
    \end{align}
    in probability as $n\rightarrow\infty$.
\end{theorem}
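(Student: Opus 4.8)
The plan is to adapt the random-coding argument of Erez, Litsyn and Zamir \cite{erez2005lattices} for $\mathbb{Z}$-lattices to the Eisenstein setting, working throughout in the real dimension $2n$ in which a complex-dimension-$n$ $\mathbb{Z}[\omega]$-lattice lives. Because the syndrome map has kernel $\varrho\mathbb{Z}[\omega]^n$ and yields $\mathbb{Z}[\omega]^n/\varrho\mathbb{Z}[\omega]^n\cong\mathbb{F}_q^n$ componentwise (this is where $q\equiv 1\bmod 3$ enters, so that $\varrho$ is a split Eisenstein prime and $\mathbb{F}_q$ is a genuine prime field), a covolume computation shows that $\Lambda=\varrho^{-1}\Lambda_{\mathcal{C}}$ has covolume $(\sqrt 3/2)^n q^{-k}$ whenever $\mathbf{G}$ has full rank, which is exactly the content of the defining relation \eqref{qkncond11}. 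The bound $r_{\Lambda}^{\text{cov}}\ge r_{\Lambda}^{\text{eff}}$ holds deterministically by comparing the volume of a covering ball with the covolume, so it suffices to prove the matching upper bound $r_{\Lambda}^{\text{cov}}\le(1+\epsilon)r_{\Lambda}^{\text{eff}}$ with probability tending to $1$ for every fixed $\epsilon>0$; together these give convergence of the ratio to $1$ in probability.

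First I would reduce the covering-radius bound to a statement about expected uncovered volume. Fix $r=(1+\epsilon)r_{\Lambda}^{\text{eff}}$ and the slightly smaller radius $r'=(1+\epsilon/2)r_{\Lambda}^{\text{eff}}$, and let $U_{r'}$ be the set of points of $\mathbb{R}^{2n}$ lying farther than $r'$ from $\Lambda$. If $r_{\Lambda}^{\text{cov}}>r$, then some deep hole is uncovered at radius $r$, and every point within distance $\delta=(\epsilon/2)r_{\Lambda}^{\text{eff}}$ of it is uncovered at radius $r'$; hence $U_{r'}$ contains a ball of radius $\delta$. By Markov's inequality applied to one period,
\begin{align}
\Pr\left(r_{\Lambda}^{\text{cov}}>r\right)\le\frac{\mathbb{E}\left[\text{Vol}\left(U_{r'}\cap\mathcal{F}\right)\right]}{V_{\mathcal{B}}(\delta)},
\end{align}
where $\mathcal{F}$ is a fundamental cell of the reference lattice $\varrho^{-1}\mathbb{Z}[\omega]^n$, whose volume is only singly-exponential in $n$. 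Since $\mathbb{E}[\text{Vol}(U_{r'}\cap\mathcal{F})]=\int_{\mathcal{F}}\Pr(\underline{x}\text{ uncovered at }r')\,d\underline{x}$, everything rests on a pointwise non-coverage estimate.

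The heart of the argument is to show that $\Pr(\underline{x}\text{ uncovered at }r')$ is doubly-exponentially small in $n$. A point $\underline{\lambda}\in\Lambda$ lies within $r'$ of $\underline{x}$ iff $\varrho\underline{\lambda}\in\Lambda_{\mathcal{C}}$ lies within $\sqrt{q}\,r'$ of $\varrho\underline{x}$; enumerating the points $\underline{p}_1,\dots,\underline{p}_N$ of $\mathbb{Z}[\omega]^n$ in $\mathcal{B}(\varrho\underline{x},\sqrt{q}\,r')$, non-coverage is the event that none of the syndromes $\tilde{\sigma}(\underline{p}_i)$ lands in the random code $\mathcal{C}=\mathrm{colspan}(\mathbf{G})$. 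A scaling computation gives $N\cdot q^{k-n}\to(r'/r_{\Lambda}^{\text{eff}})^{2n}=(1+\epsilon/2)^{2n}$, so if the $N$ membership events behaved independently the non-coverage probability would be of order $\exp(-(1+\epsilon/2)^{2n})$, which is doubly-exponential and easily defeats the singly-exponential factor $\text{Vol}(\mathcal{F})/V_{\mathcal{B}}(\delta)$ above. Establishing a bound of this order rigorously is where the real work lies.

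The main obstacle is precisely the dependence among the events $\{\tilde{\sigma}(\underline{p}_i)\in\mathcal{C}\}$, since a random linear code does not make membership of distinct vectors independent. Two ingredients would control this. First, the condition $r_{\Lambda}^{\text{eff}}<2r_{min}<\tfrac{1}{2}$ guarantees that the $\underline{p}_i$ are close enough that their pairwise differences have norm below $|\varrho|=\sqrt{q}$, the minimum norm in $\varrho\mathbb{Z}[\omega]^n$; hence the syndromes $\tilde{\sigma}(\underline{p}_i)$ are \emph{distinct}, which lets one pass to projectively independent representatives and reach the genuinely spread-out case. Second, the requirement that $k$ grow faster than $\log^2 n$ is used to show that $\mathbf{G}$ has full rank except with negligible probability and that the count $N$ concentrates around its mean, so that the heuristic value of $Nq^{k-n}$ is attained uniformly in $\underline{x}$. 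I expect the delicate step to be converting the first-moment value $Nq^{k-n}\approx(1+\epsilon/2)^{2n}$ into an honest doubly-exponential upper bound on the non-coverage probability while fully accounting for the linear dependencies, exactly as in the corresponding lemma of \cite{erez2005lattices}; the two-sided window $r_{min}<r_{\Lambda}^{\text{eff}}<2r_{min}$ with $r_{min}<\tfrac{1}{4}$ is what makes it possible to meet the integrality constraint on $q^k$ in \eqref{qkncond11} while keeping $r_{\Lambda}^{\text{eff}}$ in the admissible range as $n\to\infty$.
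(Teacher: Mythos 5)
There is a genuine gap at the core of your plan. Your argument hinges on proving that the pointwise non-coverage probability is \emph{doubly-exponentially} small in $n$ for a single random linear code, and you defer exactly this step to ``the corresponding lemma of \cite{erez2005lattices}'' --- but no such lemma exists there, and none can be obtained along the route you sketch. For a Construction-A ensemble, the membership events $\{\tilde\sigma(\underline{p}_i)\in\mathcal{C}\}$ are only \emph{pairwise} independent (scalar multiples and $\mathbb{F}_q$-linear combinations of syndromes create unavoidable correlations), so the only concentration available is the second-moment/Chebyshev bound $\mathrm{Var}(\mathcal{X})\le E(\mathcal{X})$, which yields a non-coverage probability of order $2^{-2\nu}$ with $\nu=o(\log n)$ --- polynomially small, nowhere near $\exp\left(-(1+\epsilon/2)^{2n}\right)$. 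Your ``if the events behaved independently'' heuristic is precisely what fails, and your first ingredient (distinctness of the syndromes, from $r_{\Lambda}^{\text{eff}}<2r_{min}<\tfrac{1}{2}$) does not repair it: distinct vectors can still be linearly dependent over $\mathbb{F}_q$.

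What the paper does instead (following Erez--Litsyn--Zamir) is a two-stage argument whose second stage is entirely absent from your proposal. Part I uses pairwise independence and Chebyshev only to conclude a weak statement: the constellation built from the first $k_1$ columns of $\mathbf{G}$ leaves an uncovered fraction $q_0$ of $\mathcal{V}$ satisfying $\Pr\{q_0>2^{-\nu}\}<2^{-\nu}$, i.e.\ an \emph{almost} complete covering. Part II then amplifies: each additional column of $\mathbf{G}$ adjoins an independent uniform shift of the current constellation, so the uncovered fraction of the finite set $\text{GRID}^*$ squares conditionally, $E(q_{l+1}\mid q_l)\le q_l^2$; iterating over $k_2=\lceil\log n+\log\log q\rceil$ extra columns drives $q_{k_2}$ below $q^{-n}$, and since $\text{GRID}^*$ has only $q^n$ points this forces $q_{k_2}=0$, i.e.\ a complete $(r-d)$-covering of all GRID points and hence an $r$-covering of $\mathcal{V}$. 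The doubly-exponential decay you are after, $q_{k_2}\le 2^{2^{k_2}(\gamma-\nu)-\gamma}$, is produced by this iterated squaring over $O(\log n)$ augmentation steps (this is also the real reason the theorem requires $k$ to grow faster than $\log^2 n$, not merely full-rankness of $\mathbf{G}$), not by any pointwise estimate. Your reduction of the covering radius to uncovered volume via Markov is fine, but without the column-augmentation bootstrap the argument cannot close.
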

\begin{proof}
\color{black}We would like to note that the steps we follow in this proof are similar to the proof of Theorem 2 in \cite{erez2005lattices}. The most important differences are as follows. Instead of considering the lattice points that lie within the fundamental Voronoi region of the lattice $\mathbb{Z}^n$, which is an $n$-dimensional unit cube, we consider the lattice points that lie within the fundamental Voronoi region of the lattice $\mathbb{Z}[\omega]^n$, which is an $n$-dimensional hexagon. Furthermore, since we are constrained to $q$ congruent to $1\mod{3}$, Bertrand's postulate is not sufficient to show the existence of such $q$ that satisfies \eqref{qkncond11} and \eqref{qkncond12} as $k$ grows. Therefore, we use the result in \cite{german} to show such prime numbers exist. For the rest of the proof, see Appendix~\ref{sec:GfC}.\color{black}
\end{proof}
We would like to note that a variant of Theorem~\ref{covering} can also be proven for $q$ congruent to $2\mod{3}$, which in this case we can construct $\Lambda$ from linear codes over $\mathbb{F}_{q^2}$.

{ \begin{corollary}\label{quantization}
     A lattice $\Lambda$ drawn from an $(n,k,q,\mathbb{Z}[\omega])$ ensemble, where $k<n$ but grows faster than $\log^2 n$ and where $k,q$ satisfy \eqref{qkncond11} and \eqref{qkncond12} is good for quantization, i.e.,
    \begin{align}
        G\left(\Lambda\right)\rightarrow \frac{1}{2\pi e},
    \end{align}
    in probability as $n\rightarrow\infty$.
\end{corollary}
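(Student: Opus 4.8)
The plan is to deduce quantization goodness directly from the covering goodness established in Theorem~\ref{covering}, exploiting the fact that the implication \emph{good for covering} $\Rightarrow$ \emph{good for quantization} is a purely geometric statement about the fundamental Voronoi region and is therefore insensitive to the algebraic structure of the lattice. Concretely, I would view the $n$-dimensional $\mathbb{Z}[\omega]$-lattice $\Lambda$ as a real lattice of dimension $2n$, so that $\sigma_\Lambda^2$, $G(\Lambda)$, $r_\Lambda^{\text{cov}}$ and $r_\Lambda^{\text{eff}}$ are the ordinary real $2n$-dimensional quantities, and then reproduce the argument of \cite{erez2005lattices} in this real dimension. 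Since $G(\Lambda)$ is normalized by dimension, the target constant $\tfrac{1}{2\pi e}$ is unchanged by passing from $n$ complex to $2n$ real dimensions, and the whole statement reduces to sandwiching $G(\Lambda)$ between a deterministic lower bound and an upper bound governed by the covering ratio.

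For the lower bound I would invoke the classical fact that, among all bodies of a given volume, a ball minimizes the second moment about its centroid; hence $G(\Lambda)\ge G\!\left(\mathcal{B}(r_\Lambda^{\text{eff}})\right)$, and a direct computation with Stirling's formula gives $G\!\left(\mathcal{B}(r_\Lambda^{\text{eff}})\right)\to \tfrac{1}{2\pi e}$. This bound is deterministic and holds for every realization of $\Lambda$, so it contributes no probabilistic content.

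The upper bound is the step that uses covering goodness, and here the crucial observation is to bound the second moment \emph{pointwise} rather than by integrating over the covering ball: since every $\underline{x}\in\mathcal{V}_\Lambda$ satisfies $\|\underline{x}\|\le r_\Lambda^{\text{cov}}$, one has
\begin{align}
\sigma_\Lambda^2=\frac{1}{2n\,\text{Vol}(\mathcal{V}_\Lambda)}\int_{\mathcal{V}_\Lambda}\|\underline{x}\|^2\,d\underline{x}\le\frac{\left(r_\Lambda^{\text{cov}}\right)^2}{2n}.
\end{align}
Combining this with $\text{Vol}(\mathcal{V}_\Lambda)^{1/n}=V_{\mathcal{B}}(r_\Lambda^{\text{eff}})^{1/n}\sim\frac{\pi e}{n}\left(r_\Lambda^{\text{eff}}\right)^2$ (again Stirling, now in dimension $2n$) yields
\begin{align}
G(\Lambda)=\frac{\sigma_\Lambda^2}{\text{Vol}(\mathcal{V}_\Lambda)^{1/n}}\le\frac{1}{2\pi e}\left(\frac{r_\Lambda^{\text{cov}}}{r_\Lambda^{\text{eff}}}\right)^2\bigl(1+o(1)\bigr).
\end{align}
I would stress that the pointwise bound is precisely what keeps the estimate clean: the naive bound obtained by integrating $\|\underline{x}\|^2$ over $\mathcal{B}(r_\Lambda^{\text{cov}})$ instead carries a factor $\left(r_\Lambda^{\text{cov}}/r_\Lambda^{\text{eff}}\right)^{2n}$, which need not vanish merely because the covering ratio tends to $1$. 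Avoiding this $2n$-th power trap is the only genuinely delicate point in the implication.

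Finally I would assemble the two bounds into $G\!\left(\mathcal{B}(r_\Lambda^{\text{eff}})\right)\le G(\Lambda)\le \tfrac{1}{2\pi e}\left(r_\Lambda^{\text{cov}}/r_\Lambda^{\text{eff}}\right)^2(1+o(1))$, whose left side tends to $\tfrac{1}{2\pi e}$ deterministically. By Theorem~\ref{covering}, $r_\Lambda^{\text{cov}}/r_\Lambda^{\text{eff}}\to 1$ in probability, so for every $\epsilon>0$ the event $\{r_\Lambda^{\text{cov}}/r_\Lambda^{\text{eff}}<1+\epsilon\}$ has probability tending to $1$, and on this event the upper bound is at most $\tfrac{1}{2\pi e}(1+\epsilon)^2(1+o(1))$. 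Letting $\epsilon\downarrow 0$ transfers the probabilistic covering convergence through the deterministic sandwich and gives $G(\Lambda)\to\tfrac{1}{2\pi e}$ in probability. The substantive obstacle for this corollary is essentially already discharged by Theorem~\ref{covering}; what remains is only to carry out the covering-to-quantization implication with the correct pointwise second-moment estimate and to propagate the in-probability convergence.
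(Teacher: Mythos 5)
Your proposal is correct, and at the top level it follows the same route as the paper: everything reduces to the covering goodness established in Theorem~\ref{covering}. The difference is how the implication ``good for covering $\Rightarrow$ good for quantization'' is discharged. The paper's entire proof is a citation of this implication to \cite{Zamir-Feder1996Quan} (the definition section also attributes it to \cite{erez2005lattices}), whereas you re-derive it from scratch, and your derivation is sound: the deterministic lower bound $G(\Lambda)\ge G\left(\mathcal{B}\left(r_{\Lambda}^{\text{eff}}\right)\right)\to\frac{1}{2\pi e}$ (ball minimizes normalized second moment among bodies of equal volume), the pointwise estimate $\sigma_{\Lambda}^2\le \left(r_{\Lambda}^{\text{cov}}\right)^2/(2n)$ giving $G(\Lambda)\le\frac{1}{2\pi e}\left(r_{\Lambda}^{\text{cov}}/r_{\Lambda}^{\text{eff}}\right)^2\left(1+o(1)\right)$, and the transfer of in-probability convergence through this deterministic sandwich. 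Your emphasis on the ``$2n$-th power trap'' is exactly the right concern: Theorem~\ref{covering} asserts only that $r_{\Lambda}^{\text{cov}}/r_{\Lambda}^{\text{eff}}\to1$ in probability with no rate, so any bound carrying a factor $\left(r_{\Lambda}^{\text{cov}}/r_{\Lambda}^{\text{eff}}\right)^{2n}$ (as the naive integration over the covering ball produces) would be inconclusive, and the pointwise second-moment bound is what makes the implication go through; this is also how the cited literature argues. What your version buys is self-containedness and an explicit handling of the probabilistic statement over the ensemble, which the one-line citation leaves implicit; what the paper's version buys is brevity, deferring the geometric content to the reference.
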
}
\begin{proof}
    It was shown in \cite{Zamir-Feder1996Quan} that a lattice ensemble which is good for covering is necessarily good for quantization. Thus from Theorem \ref{covering}, the result follows.
\end{proof}

\subsection{Nested $\mathbb{Z}[\omega]$-lattices obtained from Construction-A}\label{Sec:NestedZw}
Nested $\mathbb{Z}[\omega]$-lattices can be obtained from Construction-A very similar to $\mathbb{Z}$-lattices as mentioned in Section~\ref{sec: nestedZ}.
The coarse lattice $\Lambda$ is obtained through Construction-A as mentioned in Section~\ref{sec:ConstrAEis} with a corresponding generator matrix $\mathbf{B}$.
For a given $\mathbf{G}\in\mathbb{F}_q^{n\times k}$, denote $\Lambda^{\prime}$ as the corresponding $\mathbb{Z}[\omega]$-lattice obtained through Construction-A
using $\mathbf{G}$ as the generator matrix of the underlying linear code. Generate the $\mathbb{Z}[\omega]$-lattice $\Lambda_f$ as
$\Lambda_f=\mathbf{B}\Lambda^{\prime}$. It can be observed that $\Lambda\subset\Lambda_f$ with a coding rate of $\frac{k}{2n}\log{q}$.
Given $n,~k,~q$ and $\Lambda$ where $\Lambda$ is a $\mathbb{Z}[\omega]$-lattice obtained from Construction-A, we define the
$\left(n,k,q,\Lambda,\mathbb{Z}[\omega]\right)$ ensemble as the set of lattices obtained from $\Lambda$ and Construction-A as previously mentioned where
for each of these lattices, the elements of the generator matrix of the underlying linear code $\mathbf{G}_{ij}$ is i.i.d with a  uniformly distribution
over $\mathbb{F}_q$.

\begin{theorem}\label{GfACC}
There exists a pair of nested $\mathbb{Z}[\omega]$-lattices where the coarse lattice is good for covering
and the fine lattice achieves the Poltyrev limit.
\end{theorem}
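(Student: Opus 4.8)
The plan is to exhibit the desired pair probabilistically by drawing the coarse and fine lattices from the nested Construction-A ensemble of Section~\ref{Sec:NestedZw}, and to show that each of the two goodness properties holds with probability tending to one, so that their intersection is nonempty for all sufficiently large $n$. Concretely, I would take the coarse lattice $\Lambda$ to be an $(n,k,q,\mathbb{Z}[\omega])$ lattice satisfying the hypotheses of Theorem~\ref{covering} (with $q$ a natural prime congruent to $1 \bmod 3$ and $k$ growing faster than $\log^2 n$), which fixes its generator matrix $\mathbf{B}$, and then generate the fine lattice as $\Lambda_f = \mathbf{B}\Lambda'$ with $\Lambda'$ an independently drawn Construction-A lattice with code generator $\mathbf{G}$. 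The randomness thus splits into the generator of the coarse code, which alone determines $\mathbf{B}$ and $\Lambda$, and the independent generator $\mathbf{G}$ of the fine code; this independence is what lets the two high-probability events be combined. Note that $\Lambda' \supseteq \mathbb{Z}[\omega]^n$ forces $\Lambda_f = \mathbf{B}\Lambda' \supseteq \mathbf{B}\mathbb{Z}[\omega]^n = \Lambda$, so the pair is indeed nested.

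The first property is immediate: by Theorem~\ref{covering}, a coarse lattice drawn this way is good for covering with probability approaching one as $n\to\infty$ (and good for quantization by Corollary~\ref{quantization}). The substantive step is the second one. Conditioned on any fixed coarse lattice (equivalently, any fixed invertible $\mathbf{B}$), I would show that the random fine lattice $\Lambda_f = \mathbf{B}\Lambda'$ achieves the Poltyrev limit with probability approaching one by adapting Loeliger's averaging argument \cite{loeliger} to the $\mathbb{Z}[\omega]$-Construction-A ensemble. Using a Minkowski--Hlawka-type counting estimate for lattices built from random linear codes over $\mathbb{F}_q \cong \mathbb{Z}[\omega]/\varrho\mathbb{Z}[\omega]$, I would bound the ensemble-averaged maximum-likelihood error probability over the unconstrained AWGN channel by a union bound whose dominant term is the expected number of nonzero points of $\Lambda_f$ inside the noise decoding sphere. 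Because $\mathbf{B}$ is a fixed invertible map, a ball of radius $r_{\underline{z}}$ in $\Lambda_f$-coordinates pulls back to an ellipsoid in $\Lambda'$-coordinates of volume $|\det\mathbf{B}|^{-1}\text{Vol}(\mathcal{B}(r_{\underline{z}}))$, so this expected count is essentially $\text{Vol}(\mathcal{B}(r_{\underline{z}}))/\text{Vol}(\mathcal{V}_{\Lambda_f})$, which vanishes whenever $r_{\underline{z}} < r_{\Lambda_f}^{\text{eff}}$. Since achieving the Poltyrev limit only requires the error to vanish rather than to decay exponentially (as stressed in the corresponding definition), this averaged bound suffices and no expurgation step is needed.

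With both events holding with probability tending to one over independent draws, I would conclude by a union bound: the probability that $\Lambda$ fails to be covering-good plus the (conditional) probability that $\Lambda_f$ fails to achieve the Poltyrev limit is less than one for $n$ large, so a pair realizing both properties exists at every sufficiently large $n$, yielding the claimed sequence of nested $\mathbb{Z}[\omega]$-lattices. The main obstacle I anticipate lies inside the conditional Poltyrev argument, and it is twofold. First, one must establish the $\mathbb{Z}[\omega]$-analogue of the Minkowski--Hlawka equidistribution estimate for the Construction-A ensemble; and second, and more delicately, one must separate the contribution of the \emph{deterministic} sublattice $\Lambda$ (whose nonzero points are forced points of $\Lambda_f$ in every realization and are therefore not averaged out) from the random cosets of $\Lambda_f/\Lambda$, and verify that these fixed coarse points contribute negligibly to the fine-decoding error at noise levels below $r_{\Lambda_f}^{\text{eff}}$, using that $\text{Vol}(\mathcal{V}_{\Lambda})/\text{Vol}(\mathcal{V}_{\Lambda_f}) = q^{k}$ is large. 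Checking that the number-theoretic constraints of Theorem~\ref{covering} (primes $\equiv 1 \bmod 3$, whose existence is supplied by \cite{german}) remain compatible with the scaling needed for the fine-lattice averaging is the remaining point of care; the analogous $\mathbb{Z}$-result of \cite{Ordentlich} provides the template.
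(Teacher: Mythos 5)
Your proposal follows essentially the same route as the paper's proof: fix a covering-good coarse lattice supplied by Theorem~\ref{covering}, draw the fine lattice $\Lambda_f=\mathbf{B}\Lambda'$ from the conditional $\left(n,k,q,\Lambda,\mathbb{Z}[\omega]\right)$ Construction-A ensemble, establish a $\mathbb{Z}[\omega]$-version of Loeliger's Minkowski--Hlawka averaging bound on the expected number of nonzero fine-lattice points in the noise region, and invoke the fact that achieving the Poltyrev limit requires only vanishing (not exponentially decaying) error probability, so no expurgation is needed. The only difference in detail concerns the obstacle you correctly flag about the deterministic kernel points: the paper dismisses them not through the volume ratio $q^{k}$ but by noting that the growth of $q$ (with $\text{Vol}\left(\mathcal{V}_{\Lambda_f}\right)$ held fixed, so that $\gamma q\rightarrow\infty$) leaves exactly one element of the kernel of $\tilde{\sigma}$ inside the bounded support of the test function, making that term of the averaged sum vanish.
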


\begin{proof}
For this proof, we build nested $\mathbb{Z}[\omega]$-lattices as mentioned above. Using our result
from Theorem~\ref{covering}, we pick a coarse lattice $\Lambda$ which is good for covering. We then pick $\Lambda_f$ from
the $\left(n,k,q,\Lambda,\mathbb{Z}[\omega]\right)$ ensemble
as described in Section~\ref{Sec:NestedZw} and show that the Minkowski-Hlawka theorem can be proven for this ensemble \cite{loeliger}. \color{black} We would like to note that the steps we follow are very similar to the steps followed in \cite{loeliger}. Some of the important differences are as follows. Since we are constructing $\mathbb{Z}[\omega]$-lattices, we consider the fundamental Voronoi region of the lattice $\mathbb{Z}[\omega]^n$ which has a volume of $\left(\frac{\sqrt{3}}{2}\right)^n$. Therefore this should be taken into account when $\text{Vol}\left(\mathcal{V}_{\Lambda_f}\right)$ is kept constant as $n\rightarrow\infty$. \color{black} In the detailed proof provided in Appendix~\ref{sec:GFACC},
it can be observed that a lattice $\Lambda_f$ picked from the $\left(n,k,q,\Lambda,\mathbb{Z}[\omega]\right)$ ensemble achieves the Poltyrev limit as long as the generator matrix $\mathbf{B}$ of $\Lambda$ is full rank. We would like to note that this result is a generalized version of what was stated in \cite{loeliger} where $\mathbf{B}$ was assumed to be an identity matrix. One of the consequences of picking an arbitrary full rank matrix $\mathbf{B}$ would be that $\mathcal{V}_\Lambda$ might stretch out in some dimensions while shrinking in others. \color{black} Nonetheless, since the growth of $q$ in Theorem \ref{covering} ensures that $q\rightarrow\infty$, there is exactly one element in the kernel of $\tilde\sigma$ contained in the bounded region, i.e., the left term of \eqref{MH3} vanishes, and the result holds. \color{black}

\end{proof}

Now, we are ready to state the main theorem in the paper.
\begin{theorem}\label{thm:Eisen}
At relay $m$, given $\underline{h}_m$ and $\underline{a}_m$, a computation rate of
\begin{align}\label{cfrateeis}
    \mathcal{R}(\underline{h}_m,\underline{a}_m)=\log^{+}\left(\left(\|\underline{a}_m\|^{2}-
    \frac{P|\underline{h}_m^{H}\underline{a}_m|^2}{1+P\|\underline{h}_m\|^2}\right)^{-1}\right),
\end{align}
where $\underline{a}_{ml}\in\mathbb{Z}[\omega]$, is achievable.
\end{theorem}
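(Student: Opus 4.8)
The plan is to transport the dithered nested-lattice argument of Nazer and Gastpar (Section~\ref{cfgausinf}) to the Eisenstein setting, replacing the real/imaginary decomposition by direct processing in $\mathbb{C}$ with $\mathbb{Z}[\omega]$-valued coefficients, so that the entire achievability rests on the existence statement of Theorem~\ref{GfACC}. First I would invoke Theorem~\ref{GfACC} to fix a nested pair $\Lambda\subset\Lambda_f$ of $\mathbb{Z}[\omega]$-lattices in which $\Lambda$ is good for covering (hence good for quantization by Corollary~\ref{quantization}) and $\Lambda_f$ achieves the Poltyrev limit, and scale both so that the power constraint $\mathbb{E}\|\underline{x}_l\|^2\leq nP$ is met, i.e.\ the coarse second moment is $\sigma_\Lambda^2=P$ per complex dimension. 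Each source maps its message to a fine-lattice coset representative $\underline{t}_l\in\Lambda_f\cap\mathcal{V}_\Lambda$ through the ring isomorphism of Section~\ref{sec:ConstrAEis}, adds a dither $\underline{d}_l$ uniform over $\mathcal{V}_\Lambda$, and transmits $\underline{x}_l=(\underline{t}_l-\underline{d}_l)\bmod\Lambda$; by the Crypto Lemma each $\underline{x}_l$ is uniform over $\mathcal{V}_\Lambda$, mutually independent, independent of the $\underline{t}_l$, and of average power $P$.

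Next I would form the relay's effective channel. After scaling $\underline{y}_m$ by an MMSE coefficient $\alpha_m$ and subtracting $\sum_l a_{ml}\underline{d}_l$, the relay obtains $\sum_{l}a_{ml}\underline{t}_l+\underline{z}_{eq,m}$ with
\begin{align}
\underline{z}_{eq,m}=\alpha_m\underline{z}_m+\sum_{l=1}^L(\alpha_m h_{ml}-a_{ml})\underline{x}_l .
\end{align}
Because the dithers render the $\underline{x}_l$ independent of one another and of the codewords, the per-complex-dimension second moment of $\underline{z}_{eq,m}$ is $|\alpha_m|^2+P\sum_l|\alpha_m h_{ml}-a_{ml}|^2$; minimizing over $\alpha_m$ gives $\sigma_{eq}^2=P(\|\underline{a}_m\|^2-P|\underline{h}_m^H\underline{a}_m|^2/(1+P\|\underline{h}_m\|^2))$, so that $P/\sigma_{eq}^2$ is exactly the quantity inside the $\log^{+}$ in \eqref{cfrateeis}. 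The relay then quantizes onto $\Lambda_f$ modulo $\Lambda$; here the fact that $\tilde\sigma$ is a \emph{ring} homomorphism is what guarantees that the decoded Eisenstein-integer combination $\sum_l a_{ml}\underline{t}_l\bmod\Lambda$ pushes forward to the desired $\mathbb{F}_q$-linear combination $\underline{f}_m$, so that no real/imaginary splitting is needed, in contrast with Theorem~\ref{thm:Nazer}.

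The error event is $\underline{z}_{eq,m}\notin\mathcal{V}_{\Lambda_f}$, and showing that its probability vanishes whenever $R<\mathcal{R}(\underline{h}_m,\underline{a}_m)$ is the main obstacle, precisely because $\underline{z}_{eq,m}$ is \emph{not} Gaussian: it is a Gaussian term plus a self-noise term $\sum_l(\alpha_m h_{ml}-a_{ml})\underline{x}_l$ that is a sum of independent $\mathcal{V}_\Lambda$-uniform vectors. Since Theorem~\ref{GfACC} only supplies a fine lattice that \emph{achieves the Poltyrev limit} (with no exponential decay, and with the coarse lattice not assumed good for channel coding), a crude union bound is unavailable; instead I would use the quantization-goodness of $\Lambda$ to argue that the $\mathcal{V}_\Lambda$-uniform self-noise is dominated, in the sense required by the Poltyrev decoder, by a white Gaussian vector of second moment $\sigma_{eq}^2+\epsilon$ (the standard smoothing argument used in \cite{nazer2011CF}). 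With that step in place, the Poltyrev-limit property of $\Lambda_f$ forces the error probability to zero for every $\sigma_{eq}^2$ below the effective-radius threshold, i.e.\ for every $R<\mathcal{R}(\underline{h}_m,\underline{a}_m)$. Finally, a single good nested pair is exactly what Theorem~\ref{GfACC} provides, and the dithers may be retained as shared common randomness, so no further derandomization is required and the claimed computation rate is achievable.
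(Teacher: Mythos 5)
Your proposal is correct and follows essentially the same route as the paper: invoke Theorem~\ref{GfACC} for the nested $\mathbb{Z}[\omega]$-pair, use dithered transmission with MMSE scaling so that $P/\sigma_{eq}^2$ equals the argument of the $\log^{+}$ in \eqref{cfrateeis}, bound the non-Gaussian effective noise by a Gaussian via the smoothing lemma of \cite{nazer2011CF}, and let the Poltyrev-limit property of $\Lambda_f$ (which, as you note, gives vanishing but not necessarily exponentially decaying error probability) finish the argument, with the ring homomorphism $\sigma$ carrying the Eisenstein combination to the $\mathbb{F}_q$-linear combination. Two cosmetic slips only: with $\underline{x}_l=\left[\underline{t}_l-\underline{d}_l\right]\mod\Lambda$ the relay must \emph{add} $\sum_l a_{ml}\underline{d}_l$ rather than subtract it, and the Gaussian-domination step is driven by the \emph{covering} goodness of $\Lambda$ (which you do have), not merely its quantization goodness.
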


\begin{proof}

\color{black}{We would like to note that the steps we follow in this proof are very similar to the proof of Theorem 5 in \cite{nazer2011CF}. Nonetheless, there are some important differences we would like to point out. Since $a_{ml}$ are Eisenstein integers in our framework, their real and imaginary components are not independent and we cannot use a real and imaginary decomposition as in \cite{nazer2011CF}. Therefore, the channel coefficients and channel noise cannot be decomposed into real and imaginary components either. Due to this, we are constrained to employ $\mathbb{Z}[\omega]$-lattices in our framework. Furthermore, in order to obtain  $b_{ml}$ from $a_{ml}$, we use a ring homomorphism $\sigma$, which can be thought of as the equivalent of a modulo operation for $a_{ml}\in\mathbb{Z}$. We would also like to mention that this proof can be trivially extended to the case where information vectors at transmitters have different lengths by considering a sequence of nested lattice codes. We proceed as follows. }

\color{black}

Using the result from Theorem~\ref{GfACC}, a fine $\mathbb{Z}[\omega]$-lattice $\Lambda_f$ and a coarse $\mathbb{Z}[\omega]$-lattice $\Lambda$, which is nested in $\Lambda_f$ with a corresponding coding rate $\frac{R}{2}=\frac{k}{2n}\log{q}$, is chosen such that $\Lambda_f$ achieves the Poltyrev limit and $\Lambda$ is good for covering.  Both $\Lambda$ and $\Lambda_f$ are scaled such that $\sigma^2_\Lambda=P$. Following this, the lattice codebook $\Lambda_f\cap\mathcal{V}_\Lambda$ is constructed.

Source node $l$ maps its information vector $\underline{w}_l\in\mathbb{F}_q^k$, where $q=|\varrho|^2$ and $\varrho$ is an Eisenstein prime, to a lattice codeword $\underline{t}_l\in\Lambda_f\cap\mathcal{V}_\Lambda$, respectively, via a bijective mapping $\psi$,
\begin{align}
\underline{t}_l={\psi}(\underline{w})=\left[\mathbf{B}\varrho^{-1}\sigma^{-1}(\mathbf{G}\underline{w})\right],
\end{align}
where $\sigma$ was defined in Section~\ref{sec:ConstrAEis}. It then constructs a dither vector $\underline{d}_l$, which is uniformly distributed
within $\mathcal{V}_\Lambda$ and subtracts this dither vector from the lattice codeword $\underline{t}_l$ and transmits the following:
\begin{align}
\underline{x}_l=\left[\underline{t}_l-\underline{d}_l\right]\mod{\Lambda}.
\end{align}
Given a channel coefficient vector $\underline{h}_m\in\mathbb{C}^L$, relay $m$ observes
\begin{align}
\underline{y}_m=\sum_{l=1}^L h_{ml}\underline{x}_l+\underline{z}_m.
\end{align}
The relay approximates $\underline{h}_m$, in some sense, by an Eisenstein integer vector $\underline{a}_m\in\mathbb{Z}[\omega]^L$ and its goal will be to recover the following:
\begin{align}
\underline{v}_m=\left[\sum_{l=1}^L \left({a}_{ml}\underline{t}_l\right)\right]\mod{\Lambda}.
\end{align}
It proceeds by removing the dithers and scaling the observation with $\alpha_m$,  and therefore,
\begin{align}\label{eis_noise}
\underline{\tilde{y}}_m&=\alpha_m\underline{y}_m+\sum_{l=1}^{L}{a}_{ml}\underline{d}_l,
\end{align}
where $\alpha_m$ is the MMSE coefficient.

Then $\underline{\tilde{y}}_m$ is quantized to the closest lattice point in the fine lattice $\Lambda_f$ modulo the coarse lattice $\Lambda$ and estimates the following:

\begin{align}
\underline{\hat{v}}_m=\left[Q_{\Lambda_f}\left( \tilde{\underline{y}}_m \right) \right]\mod{\Lambda},
\end{align}
where $Q_{\Lambda_f}$ denotes the quantization with respect to $\Lambda_f$. \color{black}The remaining steps of the proof would be identical to the steps in the proof of Theorem 5 in \cite{nazer2011CF} with the only difference being as follows. \color{black} The relay maps $\underline{\hat{v}}_m$ to $\underline{\hat{f}}_m$ via $\psi^{-1}$, where
\begin{align}
\psi^{-1}\left(\underline{\hat{v}}_m\right)=\underline{\hat{f}}_m=\left(\mathbf{G}^{T}\mathbf{G}\right)^{-1}\mathbf{G}^{T}\sigma\left(\varrho\left(\left[\mathbf{B}^{-1}\underline{\hat{v}}_m\mod{\Lambda}\right]\right)\right)=\bigoplus_{l=1}^L b_{ml}\underline{\hat{w}}_l,
\end{align}
and $b_{ml}=\sigma\left(a_{ml}\right)$.

Due to the fact that $\Lambda$ is good for covering and the dithers are uniformly distributed in $\mathcal{V}_{\Lambda}$, the probability density function of the equivalent noise $\underline{z}_{eq,m}$ is upper-bounded by a zero-mean complex Gaussian with a variance that approaches $|\alpha_m|^2+P||\alpha_m \underline{h}_m-\underline{a}_m||^2$ multiplied by a constant as $n\rightarrow\infty$ (\cite[Lemma 8]{nazer2011CF}). We would like to note that the error probability $\text{Pr}\left(\underline{z}_{eq}\not\in\mathcal{V}_{\Lambda_f}\right)$ goes to zero as $n\rightarrow\infty$, however this decay
is not necessarily exponential in $n$, since we have only proven the existence of $\mathbb{Z}[\omega]$-lattices which achieve the Poltyrev limit
and this result does not provide information about the error exponents of such lattices. Nonetheless, it is sufficient to achieve the
computation rate in \eqref{cfrateeis}.
\end{proof}
%

\color{black}

Given $\mathbf{H}$ and assuming that the relays do not cooperate with each other, each relay would attempt to pick $\underline{a}_m\in\mathbb{Z}[\omega]^L$
that maximizes its individual computation rate, i.e. $\underline{a}_m=\underset{\underline{a}\in\mathbb{Z}[\omega]^L}{\arg\max}~\mathcal{R}(\underline{h}_m,\underline{{a}}_m)$
in order to maximize $\mathcal{R}\left(\mathbf{H},\mathbf{A} \right)$. A straightforward method to determine the optimal $\underline{a}_m$ would be to employ an exhaustive search over all $\underline{a}_m$ that satisfies $\|\underline{a}_m\|^2<1+\|\underline{h}_m\|^2P$ (\cite[Lemma 1]{nazer2011CF}). One major challenge in the compute-and-forward paradigm is that for large $P$ and $L$, exhaustively searching optimal $\underline{a}_m$ becomes infeasible. Nonetheless, this problem can be molded into a different form which enables the utilization of much more efficient algorithms (see \cite{feng} for $\mathbb{Z}[i]$ and \cite{other_eis} for $\mathbb{Z}[\omega]$ for example.) In the following subsection, we review this approach for the sake of completeness.

\subsection{An efficient algorithm for choosing $\underline{a}_m$}\label{Sec: Choose a}

As can be seen in (\cite{nazer2011CF}), upon scaling $\underline{y}_m$ with the MMSE coefficient $\alpha_m$, the effective noise variance at relay $m$, which we denote as $\sigma_{\text{eff},m}^2$,
can be computed as

\begin{align}\label{eq: noise eff}
\sigma_{\text{eff},m}^2=|\alpha_m|^2+P\|\alpha_m\underline{h}_m-\underline{a}_m\|^2,
\end{align}
where
\begin{align}\label{eq: mmse coeff}
\alpha_m=\frac{P\underline{h}_m^{H}\underline{a}_m}{1+\|\underline{h}_m\|^2}.
\end{align}
Furthermore, the achievable computation rate at each relay can be expressed in terms of $P$ and $\sigma_{\text{eff},m}^2$ as
\begin{align}
\mathcal{R}\left(\underline{h}_m,\underline{a}_m\right)=\log^{+}\left(\frac{P}{\sigma_{\text{eff},m}^2}\right).
\end{align}
Therefore,
\begin{align}
\underset{\underline{a}_m\in\mathbb{Z}[\omega]^L}{\arg\max}~\mathcal{R}\left(\underline{h}_m,\underline{a}_m\right)=\underset{\underline{a}_m\in\mathbb{Z}[\omega]^L}{\arg\min}~\sigma_{\text{eff},m}^2.
\end{align}

We now take a closer look at $\sigma_{\text{eff},m}^2$. Substituting \eqref{eq: mmse coeff} in \eqref{eq: noise eff}, it can be observed that
\begin{align}
\sigma_{\text{eff},m}^2&=P\underline{a}_m^{H} \underline{a}_m-\frac{P^2 \underline{a}_m^{H} \underline{h}_m \underline{h}_m^{H} \underline{a}_m }{1+P\|\underline{h}_m\|^2}\nonumber\\
&=P\underline{a}_m^{H}\left(\mathbf{I}-\frac{P\underline{h}_m\underline{h}_m^{H}}{1+P\|\underline{h}_m\|^2}\right)\underline{a}_m
\end{align}
Due to the Matrix Inversion Lemma \cite{MIL},
\begin{align}
\mathbf{I}-\frac{P\underline{h}_m\underline{h}_m^{H}}{1+P\|\underline{h}_m\|^2}=\left(I+P\underline{h}_m\underline{h}_m^{H}\right)^{-1},
\end{align}
and $\sigma_{\text{eff},m}^2$ can be expressed as
\begin{align}
\sigma_{\text{eff},m}^2=P\underline{a}_m^{H}\left(I+P\underline{h}_m\underline{h}_m^{H}\right)^{-1}\underline{a}_m.
\end{align}
Note that $\left(I+P\underline{h}_m\underline{h}_m^{H}\right)$, which we denote as $\mathbf{S}$, is a Hermitian matrix. Therefore, the singular value decomposition of $\mathbf{S}$ can be expressed as $\mathbf{V}\mathbf{D}\mathbf{V}^{H}$, where $\mathbf{D}$ is a diagonal matrix which has the eigenvalues of $\mathbf{S}$ as non-zero entries and $\mathbf{V}$ is an orthogonal matrix which has the corresponding eigenvectors of $\mathbf{S}$ in its columns. Hence,
\begin{align}
\sigma_{\text{eff},m}^2&=P\underline{a}_m^{H}\left(\mathbf{V}\mathbf{D}^{-1}\mathbf{V}^{H}\right)\nonumber\\
&=P\|\mathbf{D}^{-1/2}\mathbf{V}^{H}\underline{a}_m\|^2,
\end{align}
and therefore it can be concluded that
\begin{align}\label{eq: SVP}
\underset{\underline{a}_m\in\mathbb{Z}[\omega]^L}{\arg\min}~\sigma_{\text{eff},m}^2=\underset{\underline{a}_m\in\mathbb{Z}[\omega]^L}{\arg\min}~\|\mathbf{D}^{-1/2}\mathbf{V}^{H}\underline{a}_m\|^2.
\end{align}
Thus, the search in \eqref{eq: SVP} is equivalent to finding the non-zero minimal Euclidean norm point
generated by $\mathbf{D}^{-1/2}\mathbf{V}^{H}$ as a $\mathbb{Z}[\omega]$-lattice,
which is commonly referred to as the shortest vector problem (SVP). For reasonable values of $L$,
e.g. $L \le 32$, one of the shortest lattice vectors can be found via a Pohst enumeration or
a Schnorr-Euchner enumeration in a way similar to standard sphere decoding \cite{viterbo99}\cite{Agrell}.
A polynomial-time method to approximate \eqref{eq: SVP} is based on LLL reduction \cite{LLL}.
For our lattices, an LLL over $\mathbb{Z}[\omega]$ should be used as devised by Napias for
Euclidean rings \cite{Napias} including both $\mathbb{Z}[i]$ and $\mathbb{Z}[\omega]$.
Also in \cite{Sakzad}, LLL has been proposed in a different methodology
with no singular value decomposition of $\mathbf{S}$.
Finding approximately optimal $\underline{a}_m$ efficiently is an active research area. The interested reader is referred to \cite{viterbo} and the references therein.

\color{black}

\section{Numerical Results}\label{Sec: num_res}

In this section, we present some numerical results on the achievable computation rates with $\mathbb{Z}[\omega]$-lattices and compare them to the maximum achievable rates with $\mathbb{Z}$-lattices. We consider the case of $L=2$ transmitters and there is $M=1$ relay. For a given channel coefficient vector $\underline{h}$, let $\mathcal{R}_E(\underline{h})$ and $\mathcal{R}_G(\underline{h})$, denote the maximum achievable rate using $\mathbb{Z}[\omega]$-lattices and $\mathbb{Z}$-lattices, respectively, i.e.,
\begin{eqnarray}\label{eq: eis_max_rate}
\mathcal{R}_E(\underline{h},P)=\underset{\underline{a}\in\mathbb{Z}[\omega]^2}{\max}\log^{+}\left(\left(\|\underline{a}\|^{2}-\frac{P|\underline{h}^{H}\underline{a}|^2}{1+P\|\underline{h}\|^2}\right)^{-1}\right),
\end{eqnarray}
and
\begin{eqnarray}\label{eq: gauss_max_rate}
\mathcal{R}_G(\underline{h},P)=\underset{\underline{\tilde{a}}\in\mathbb{Z}[i]^2}{\max}\log^{+}\left(\left(\|\underline{\tilde{a}}\|^{2}-\frac{P|\underline{h}^{H}\underline{\tilde{a}}|^2}{1+P\|\underline{h}\|^2}\right)^{-1}\right).
\end{eqnarray}

\begin{figure}[h]
\centering
\centerline{\includegraphics[width=4.5in]{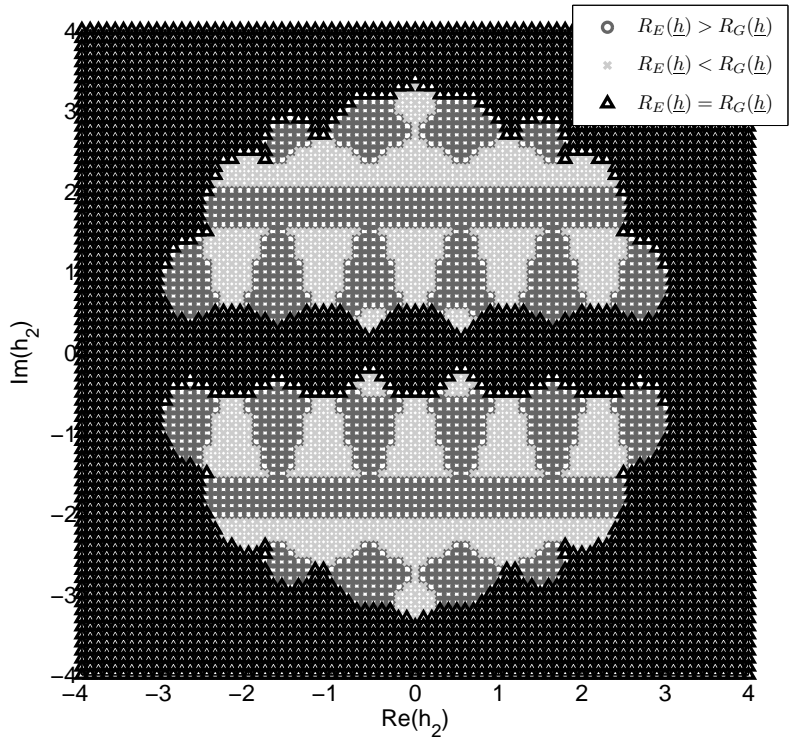}} \caption{Regions of $\Re\left(h_2\right),\Im\left(h_2\right)$ where $\mathcal{R}_G(\underline{h},P)>\mathcal{R}_E(\underline{h},P)$, $\mathcal{R}_G(\underline{h},P)<\mathcal{R}_E(\underline{h},P)$ or $\mathcal{R}_G(\underline{h},P)=\mathcal{R}_E(\underline{h},P)$: SNR=10 dB  }\label{fig-Eis and Gauss Comp}
\end{figure}

In Fig.~\ref{fig-Eis and Gauss Comp}, we fix $h_1=1$ and choose $h_2$ such that $\Re({h_2}),\Im({h_2})\in[-4,4]$. We would also like to note that we do not impose a probability distribution on $h_2$.
For each pair $(h_1=1,h_2)$, we plot the region where $\mathcal{R}_G(\underline{h})>\mathcal{R}_E(\underline{h})$, $\mathcal{R}_G(\underline{h})<\mathcal{R}_E(\underline{h})$ or $\mathcal{R}_G(\underline{h})=\mathcal{R}_E(\underline{h})$. For the total number of realizations considered, $\mathcal{R}_E>\mathcal{R}_G$, $\mathcal{R}_E<\mathcal{R}_G$.
and $\mathcal{R}_E=\mathcal{R}_G$ for  $22.6\%$, $15.9\%$, and $61.5\%$ of the realizations, respectively. One might expect that $\mathbb{Z}[\omega]$-lattices would attain
a greater maximum achievable rate when $h_2$ is closer to an Eisenstein integer, $\mathbb{Z}$-lattices would attain a greater maximum achievable rate when
$h_2$ is closer to a Gaussian integer and both lattices would achieve the same maximum achievable rate when $h_2$ is closer to a natural integer. However as seen from Fig.~\ref{fig-Eis and Gauss Comp}, other factors
also contribute to the maximum achievable rate. For example when $\|h_2\|\gg\|h_1\|$ or $\|h_2\|\ll\|h_1\|$, the relay chooses $a_1=0,\|a_2\|=1$ or $\|a_1\|=1,\|a_2\|=0$, respectively since treating the other transmitted
signal as noise (decode-and-forward) results in maximum achievable rate. Also, the MMSE scaling coefficient $\alpha$ plays a very important role as seen in \eqref{eq: gaus_mmse1}, \eqref{eq: gaus_mmse2} and \eqref{eis_noise}.
Note that \eqref{eq: eis_max_rate} and \eqref{eq: gauss_max_rate} can be written as

\begin{align}\label{eq: eis_max_rate_2}
\mathcal{R}_E(\underline{h},P)=\underset{\underline{a}\in\mathbb{Z}[\omega]^2}{\max}\log^{+}\left(\frac{1+P\|\underline{h}\|^2}{\|\underline{a}\|^{2}+P\left(\|\underline{a}\|^{2}|\underline{h}\|^{2}-|\underline{h}^{H}\underline{a}|^2\right)} \right)\nonumber\\
\end{align}
and
\begin{align}\label{eq: gauss_max_rate_2}
\mathcal{R}_G(\underline{h},P)=\underset{\underline{\tilde{a}}\in\mathbb{Z}[i]^2}{\max}\log^{+}\left(\frac{1+P\|\underline{h}\|^2}{\|\tilde{a}\|^{2}+P\left(\|\underline{\tilde{a}}\|^{2}|\underline{h}\|^{2}-|\underline{h}^{H}\underline{\tilde{a}}|^2\right)} \right),\nonumber\\
\end{align}
respectively.

As one can see from the denominators in \eqref{eq: eis_max_rate_2} and \eqref{eq: gauss_max_rate_2}, it is desirable to align $\underline{a}$ ($\underline{\tilde{a}}$) with $\underline{h}$ as much as possible in order to minimize the second term. However, when
$\underline{h}\not\in\mathbb{Z}[i]^2$,$\underline{h}\not\in{\mathbb{Z}[\omega]}^2$, or the elements of $\underline{h}$ cannot be written as the ratio of Gaussian integers or Eisenstein integers, or $\underline{h}$ is not a rotated version of a Gaussian integer vector
or Eisenstein integer vector, $\|\underline{a}\|\rightarrow\infty$ ($\|\underline{\tilde{a}}\|\rightarrow\infty$)
for perfect alignment. Unfortunately, this results in the first term of the denominator to grow and hence there is a tradeoff. Therefore even though $h_2$ might be closer to an Eisenstein integer (Gaussian integer), i.e. $\underline{h}$ is aligned better with
a vector in $\mathbb{Z}[i]^2$ ($\mathbb{Z}[\omega]^2$), the magnitude of this vector might be too large and thus a larger computation rate may be achieved by choosing $\underline{a}\in\mathbb{Z}[i]^2$ ($\underline{\tilde{a}}\in{\mathbb{Z}[\omega]}^{2}$).

\begin{figure}[h]
\centering
\centerline{\includegraphics[width=4.5in]{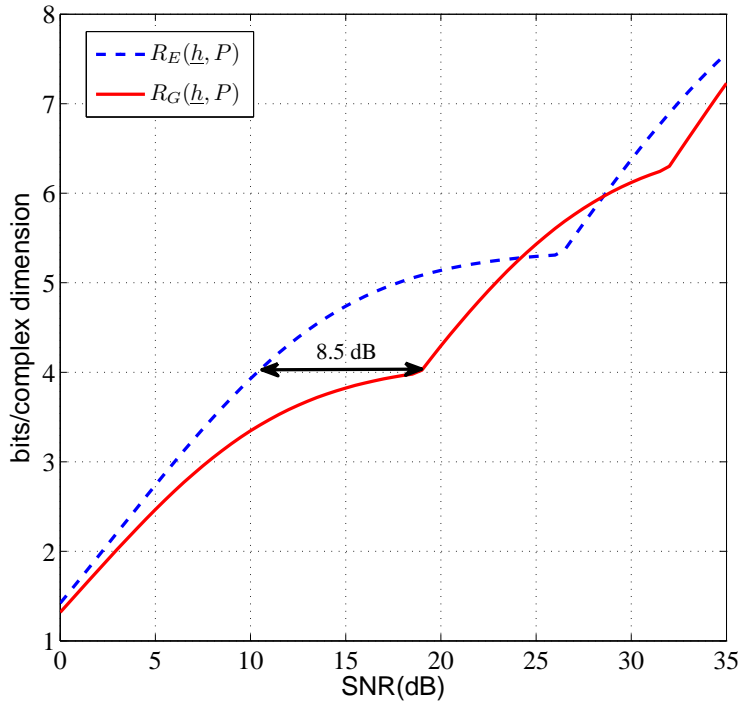}} \caption{A comparison of $\mathcal{R}_E(\underline{h},P)$ and $\mathcal{R}_G(\underline{h},P)$ for $\underline{h}=[1.4193+j0.2916;0.1978+j1.5877]$ }\label{fig-Eis and Gauss Comp 1}
\end{figure}
In Fig.~\ref{fig-Eis and Gauss Comp 1}, we fix the channel realization to be $\underline{h}=[1.4193+j0.2916;0.1978+j1.5877]$ and compare $\mathcal{R}_E(\underline{h},P),~\mathcal{R}_G(\underline{h},P)$ for different SNRs. For this particular $\underline{h}$, it can be observed that $\mathbb{Z}[\omega]$-lattices can achieve substantially higher rates than $\mathbb{Z}$-lattices in the medium SNR regime.  We would like to note that this is not necessarily the case for every channel realization, nonetheless it is a perfect example of how channel realizations affect the performance of $\mathbb{Z}[\omega]$-lattices and $\mathbb{Z}$-lattices. Therefore, a larger number of channel realizations should be considered in order to make a fair comparison of their performance in the average sense.

\subsection{Outage performance comparison of $\mathbb{Z}$-lattices vs. $\mathbb{Z}[\omega]$-lattices in compute-and-forward}
In this subsection, we compare the outage performance lattice codes over $\mathbb{Z}$ and lattice codes over $\mathbb{Z}[\omega]$ for
compute-and-forward. Given a target rate $R_{T}$ and a probability distribution $\mathcal{P}$ on $\underline{h}$, i.e. $\underline{h}\sim\mathcal{P}$,
 we define the outage event of using $\mathbb{Z}$-lattices and $\mathbb{Z}[\omega]$-lattices as $\mathcal{R}_G(\underline{h})<R_{T}$
 and $\mathcal{R}_E(\underline{h})<R_{T}$, respectively. In Fig. \ref{fig-Eis and Gauss Out}, we plot the outage probability with $\mathbb{Z}[\omega]$-lattices
  and $\mathbb{Z}$-lattices as a function of SNR ($P$) where
  $\Re \left(h_1\right),\Im \left(h_1\right),\Re \left(h_2\right),\Im \left(h_2\right)\sim \mathcal{N}(0,1)$.
  We average over 100000 realizations of $\underline{h}$ at each SNR and choose the target rate to be $R_T=1/2\log_{2}{7}$ bits/symbol/Hz.
  As seen in Fig. \ref{fig-Eis and Gauss Out}, there is a 0.4 dB gain from using $\mathbb{Z}[\omega]$-lattices instead of $\mathbb{Z}$-lattices
   in terms of outage performance. We would like to note that this gain comes with no additional computational complexity.

\begin{figure}[h]
\centering
\centerline{\includegraphics[width=4.5in]{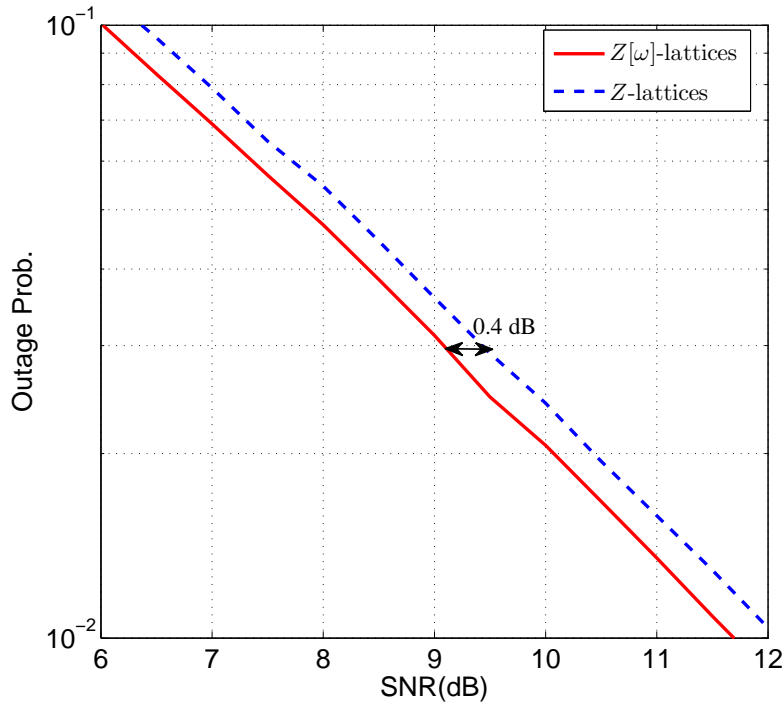}} \caption{Outage Probability of $\mathbb{Z}[\omega]$ Lattices vs $\mathbb{Z}$ Lattices }\label{fig-Eis and Gauss Out}
\end{figure}

\subsection{Error correcting capability of $\mathbb{Z}$-lattices vs. $\mathbb{Z}[\omega]$-lattices in compute-and-forward}

\color{black} In this subsection, we compare the error-correcting capability of lattice codes over $\mathbb{Z}$ and lattice codes over $\mathbb{Z}[\omega]$ for
compute-and-forward. Before we do that, we would like to point out that in general, the nested lattice shaping adopted in the previous sections is very difficult to be implemented. In fact, it is equivalent to the SVP and hence is NP-hard. In practice, one could trade performance for complexity by considering the use of hypercube shaping. Then the proposed scheme would reduce to the concatenation of a linear code over $\mathbb{F}_q$ with a constellation corresponding to a set of minimum energy coset leaders of the quotient ring $\mathbb{Z}[\omega]/\varrho\mathbb{Z}[\omega]$ (or $\mathbb{Z}/q\mathbb{Z}$). In the following, we compare the error-correcting capability for this practical scheme.

In order to construct a lattice code over Eisenstein integers, we have used a rate 1/2, regular (3,6), uniformly distributed edge weight,
length 10000 LDPC code over $\mathbb{F}_{25}$ and mapped each codeword component to the constellation carved from $\mathbb{Z}[\omega]/5 \mathbb{Z}[\omega]$ via a ring homomorphism. In order to construct a lattice code over natural integers, we have used a rate 1/2, regular (3,6), uniformly distributed edge weight, length 10000 LDPC code over $\mathbb{F}_{5}$ and mapped each codeword component to the coset leaders of the quotient ring $\mathbb{Z}/5\mathbb{Z}$, i.e. $\{-2,-1,0,1,2\}$. Note that for the lattice code over natural integers, we consider $\mathbb{F}_{5}$ due to the real and imaginary decomposition. We have generated 100000 channel realizations, used these channel realizations over a range of SNR, and we have plotted the average symbol error probability of these lattice codes for the compute-and-forward framework. As seen in Fig. \ref{fig-Eis and Gauss SER} simulation results show that lattice codes over Eisenstein integers outperform lattice codes over integers by roughly 0.4~dB, which is consistent with our outage simulation results.

\begin{figure}[h]
\centering
\centerline{\includegraphics[width=4.5in]{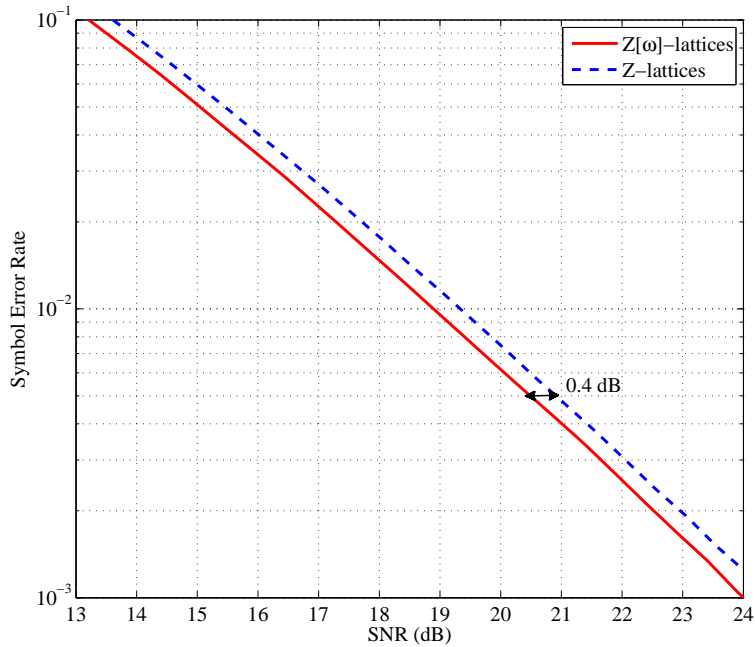}} \caption{Symbol error rate of $\mathbb{Z}[\omega]$ Lattices vs $\mathbb{Z}$ Lattices }\label{fig-Eis and Gauss SER}
\end{figure}

\color{black}

\section{Conclusion}
In this paper, we have shown the existence of lattices over Eisenstein integers that are simultaneously good for quantization and that achieve the Poltyrev limit. \color{black} These lattices were then used to generate lattice codes over Eisenstein integers which were implemented for compute-and-forward and thus enable the relays to decode to linear combinations of lattice points with Eisenstein integer coefficients instead of Gaussian integers. Due to the fact that Eisenstein integers quantize channel coefficients better than Gaussian integers, one can expect an increased achievable computation rate on average. Simulation results suggest that for compute-and-forward, lattice codes over Eisenstein integers provide improved outage performance and error-correcting performance in the average sense compared to lattice codes over integers without the cost of additional computational complexity. \color{black}

\appendix
In this section, we provide the proofs for Theorem~\ref{covering} and Theorem~\ref{GfACC}. We would like to note that the proof techniques used in proving Theorem~\ref{covering} are very similar to those used in \cite{erez2005lattices} and our proof of Theorem~\ref{GfACC} is largely based on the proof in \cite{loeliger}. However, there are a few steps that have to be re-derived since Eisenstein integers are considered. We present the entire proof for the purpose of completeness. We first give some definitions and preliminaries that will be very useful for the proofs.
\subsection{Notations and Definitions for $\mathbb{Z}[\omega]$-lattices}\label{Sec: Appendix Notation}
In \cite[p. 54]{conway1999sphere}, it is stated that an $n$-dimensional complex lattice can be equivalently thought of as a $2n$-dimensional real lattice by the following mapping
\begin{align}\label{ctormap}
[\lambda(1) \cdots\lambda(n)]^{T}\rightarrow[\Re(\lambda(1))\hspace{0.05in}\Im(\lambda(1)) \cdots \Re(\lambda(n))\hspace{0.05in}\Im(\lambda(n))]^{T}\nonumber\\
\end{align}
where the left hand side is an $n$-dimensional complex lattice point and the right hand side is its $2n$-dimensional real representation. Thus we shall consider $n$-dimensional Eisentein lattices as $2n$-dimensional real lattices and use $\mathbb{C}^n$ and $\mathbb{R}^{2n}$ interchangeably.
We shall now introduce the notation that will be used in this section.
\begin{itemize}
\item
$S^{\prime}$: $S\setminus{0}$, where $S$ is any discrete set.
\item
$\mathcal{V}$: Fundamental Voronoi region of the lattice $\mathbb{Z}[\omega]^n$.
\item
GRID: The lattice $\varrho^{-1}\mathbb{Z}[\omega]^n$, where $\varrho$ is an Eisenstein prime.
\item
$\underline{x}^*=\underline{x}\mod\mathcal{V}=\underline{x}\mod \mathbb{Z}[\omega]^n=\underline{x}-Q_{\mathbb{Z}[\omega]^n}\left( \underline{x}\right) $ where
$\underline{x}\in\mathbb{C}^n$.
\item
$\mathcal{A}^{*}=\mathcal{A}\mod \mathcal{V}$, where $\mathcal{A}$ is any set in $\mathbb{C}^n$ and the $\mod \mathcal{V}$ operation is done element-wise.
\item
$\mathcal{A}^{\prime} \triangleq \mathcal{A}\setminus\lbrace0\rbrace$ where $\mathcal{A}\subset\mathbb{R}^n$, $\mathcal{A}\subset\mathbb{C}^n$ or $\mathcal{A}\subset\mathbb{F}_q^n$
\item
$\Lambda$: An $n$-dimensional $\mathbb{Z}[\omega]$-lattice nested in GRID, i.e., $\Lambda \subset\text{GRID}$ .
\item
Vol($\cdot$): Volume of a closed set in $\mathbb{C}^n$, or equivalently volume of a closed set in $\mathbb{R}^{2n}$.
\item
$\text{GRID}^{*}$:  $\text{GRID}\cap\mathcal{V}$.
\item
$\mathcal{B}(r)$:A complex $n$-dimensional, or equivalently real $2n$-dimensional, closed set of points inside a sphere of radius $r$ centered at the origin.
\item
$\Lambda^{*}$: The lattice constellation, i.e. $\Lambda^{*}=\Lambda\cap \mathcal{V}$. Note that $\Lambda^{*}$ can generate $\Lambda$ as follows:
\begin{align}
\Lambda=\Lambda^{*}+\mathbb{Z}[\omega]^n.
\end{align}
\item
$M=|\Lambda^{*}|$: Cardinality of the lattice constellation.
\item
$\Lambda_{i}^{*}$: A point in $\Lambda^{*}$, $i\in\{ 0,\cdots,M-1 \}$.
\end{itemize}
Note that by our construction, the lattices chosen from the $(n,k,q,\mathbb{Z}[\omega])$-lattice ensemble are periodic modulo the region $\mathcal{V}$.
Thus we can restate all the properties of our lattice in terms of the lattice constellation $\Lambda^{*}$ that lies within $\mathcal{V}$.
The $(n,k,q,\mathbb{Z}[\omega])$-lattice ensemble has the following properties:
\begin{enumerate}
\item
$\Lambda_{0}^{*}=\underline{0}$ deterministically.
\begin{proof}
$\underline{0}$ is always a valid lattice point due to the definition of a lattice and $\underline{0}^{*}=\underline{0}$. Thus the result holds.
\end{proof}
\item
$\Lambda_{i}^{*}$ is distributed uniformly over $\text{GRID}^*$ for $i\in\{1,\cdots,M-1 \}$ where $M=q^k$.
\begin{proof}
Each element of $\mathbf{G}$ is chosen uniformly over $\mathbb{F}_q$, therefore each codeword of the underlying linear code
is distributed uniformly over $\mathbb{F}_q^n$. Due to last step in Construction A
in Section~\ref{sec:ConstrAEis} where the lattice is scaled with $\varrho^{-1}$ and the ring homomorphism $\tilde\sigma$, the result holds.
\end{proof}
\item
The difference $({\Lambda_{i}^{*}-\Lambda_{l}^{*}})^*$ is uniformly distributed over $\text{GRID}^{*}$ for all $i\neq j$.
\begin{proof}
This result holds due to the previous property and the definition of the $*$ operation.
\end{proof}
\item
$|\Lambda^*|=q^k$ with high probability if $n-k\rightarrow\infty$
\begin{proof}
\begin{align}
\text{Pr}\{\text{rank}(\mathbf{G})<k\}&\leq\sum_{\underline{c}\neq\underline{0}}\text{Pr}\left\{\sum_{i=1}^k c_i \mathbf{G}_i=\underline{0} \right\}\nonumber\\
&=q^{-n}(q^k-1),
\end{align}
where $c_i$ would be elements of a $k\times1$ coefficient vector $\underline{c}$.
\end{proof}
\end{enumerate}

We shall refer to
$\mathcal{B}(r)^*=\mathcal{B}(r)\mod \mathcal{V}$ as a $\mathcal{V}$-ball. Under the assumption that $r<\frac{1}{2}$, we say that $\left(\Lambda^{*} +\mathcal{B}(r) \right)^*$
is a $\mathcal{V}$-covering if
\begin{align}
\mathcal{V}\subseteq\bigcup_{\underline{\lambda}\in\Lambda^{*}}\left(\underline{\lambda}+\mathcal{B}(r)\right)^{*}.
\end{align}
Note that $\Lambda+\mathcal{B}(r)$ is a covering if and only if $\left(\Lambda^{*} +\mathcal{B}(r) \right)^*$is a $\mathcal{V}$-covering\\

In our lattice ensemble, we will constrain $k<\beta n$ for some $0<\beta<1$. Therefore $\text{Pr}\{\text{rank}(\mathbf{G})\neq k\}$ goes to zero at least exponentially. If $\mathbf{G}$ is full rank, there are $M=q^k$ many codewords that lie in $\mathcal{V}$. Also, an $n$-dimensional $\mathcal{V}$ is known to have a volume of $\left(\frac{\sqrt{3}}{2}\right)^n$. Then the volume of the Voronoi
 region of our lattice is equal to $\left(\frac{\sqrt{3}}{2}\right)^n q^{-k}$. In our analysis very similar to \cite{erez2005lattices}, we will hold the effective
 radius of the Voronoi region of $\Lambda$, denoted as $r_{\Lambda}^{\text{eff}}$ approximately constant as $n\rightarrow\infty$. This implies the following:
\begin{align}\label{qkncond1}
q^k&=\frac{\left(\frac{\sqrt{3}}{2}\right)^n}{V_{\mathcal{B}}\left({r_{\Lambda}^{\text{eff}}} \right)}=\frac{\left(\frac{\sqrt{3}}{2}\right)^n\Gamma\left(n+1\right)}{\pi^n\left(r_{\Lambda}^{\text{eff}} \right)^{2n}}\nonumber\\
&=\sqrt{2n\pi}\left(\frac{\sqrt{3}}{2\left(r_{\Lambda}^{\text{eff}} \right)^{2}}\right)^n\left(\frac{n}{e}\right)^n\left(1+O\left(\frac{1}{n}\right)\right).
\end{align}
Note that $q$ can either be a natural prime congruent to $1\mod 3$ or the square of a natural prime congruent to $2\mod 3$, nonetheless we shall restrict $q$ to be a natural prime congruent to $1\mod 3$ for the sake of simplicity. 
We would like to note that it is not possible to keep $r_{\Lambda}^{\text{eff}}$ constant as $n$ grows since $q$ has to be a natural prime congruent
to $1\mod{3}$ and $k$ has to be an integer. Therefore, we will relax this condition to
\begin{align}\label{qkncond2}
r_{min}<r_{\Lambda}^{\text{eff}}< 2 r_{min},
\end{align}
as $n$ grows, where $0<r_{min}<\frac{1}{4}$. Although we have restricted $q$ to be a natural prime congruent to $1\mod 3$ , with the assumption
 of $k\leq\beta n$ for $\beta<1$, (\ref{qkncond2}) can be satisfied for any large enough $n$ due to the following. Let $q^*$ be the real number
 that satisfies (\ref{qkncond1}) for a radius of $2 r_{min}$. Then, $q^{*^k}=\frac{1}{V_{\mathcal{B}}(\sqrt{\frac{2}{\sqrt{3}}}2 r_{min})}$ and
 from (\ref{qkncond2}), $q$ must satisfy
\begin{align}\label{qint}
q^*<q<2^{2n/k}q^{*}.
\end{align}
Finally, to show that for each $n>4$ in our sequence a corresponding $q$ exists that satisfies \eqref{qint}, we use the following lemma.
\begin{lemma}[\cite{german}]
There always exists a natural prime congruent to $1\mod 3$ between integers $m$ and $2m$ where $m>4$.
\end{lemma}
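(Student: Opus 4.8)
The plan is to prove a Bertrand-type statement refined to the arithmetic progression $1 \bmod 3$ by controlling the Chebyshev function restricted to that residue class. Write $\theta(x;3,1)=\sum_{p\le x,\,p\equiv 1(3)}\log p$ for the sum of $\log p$ over primes $p\le x$ with $p\equiv 1\pmod 3$. Since a prime $p\equiv 1\pmod 3$ with $m<p<2m$ exists as soon as $\theta(2m;3,1)-\theta(m;3,1)>0$, it suffices to bound this increment strictly away from zero. First I would record the relevant density: since $\phi(3)=2$, exactly half of the primes lie in the class $1\bmod 3$, and the prime number theorem for arithmetic progressions gives $\theta(x;3,1)\sim x/\phi(3)=x/2$.

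The main analytic input is an \emph{effective} version of this asymptotic. I would invoke an explicit bound of the form $|\theta(x;3,1)-x/2|\le \epsilon\,x$ valid for all $x\ge x_0$, with an explicit constant $\epsilon<1/6$ and an explicit threshold $x_0$ (such bounds follow from explicit estimates for $\psi(x;q,a)$ at the small modulus $q=3$, after absorbing the negligible prime-power contribution $\psi-\theta=O(\sqrt{x}\log x)$). Feeding this into the increment yields
\begin{align}
\theta(2m;3,1)-\theta(m;3,1)&\ \ge\ \Bigl(\tfrac{2m}{2}-2\epsilon m\Bigr)-\Bigl(\tfrac{m}{2}+\epsilon m\Bigr)\nonumber\\
&\ =\ m\Bigl(\tfrac12-3\epsilon\Bigr)\ >\ 0,
\end{align}
where positivity is precisely the condition $\epsilon<1/6$. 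Hence for every $m\ge x_0$ the interval $(m,2m)$ contains a prime congruent to $1\bmod 3$.

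The remaining range $4<m<x_0$ is finite and I would dispose of it computationally, by listing the primes $7,13,19,31,37,43,61,67,73,79,\dots$ and checking that each interval $(m,2m)$ meets the list. The genuine obstacle lies in the second paragraph: the sharpest known constants $\epsilon$ (far below $1/6$) hold only from a large threshold $x_0$, whereas cruder estimates valid from a small $x_0$ have $\epsilon$ too big to force positivity directly, so one must chain an intermediate-range explicit Chebyshev bound for modulus $3$ with a finite verification up to the threshold. Bridging this tension is the technical heart of the statement, and is exactly what the cited result \cite{german} supplies. I note finally that $1\bmod 3$ is the progression that matters here because the rational primes $q\equiv 1\pmod 3$ are precisely those that split in $\mathbb{Z}[\omega]$ and thus arise as $q=|\varrho|^2$ for an Eisenstein prime $\varrho$; this is what makes $\mathbb{Z}[\omega]/\varrho\mathbb{Z}[\omega]\cong\mathbb{F}_q$, and the lemma guarantees such a $q$ exists in the required range as $n$ grows.
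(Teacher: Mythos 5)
The paper offers no proof of this lemma at all: it is stated verbatim as Breusch's 1932 theorem, with the citation \cite{german} serving as the entire justification, and is then used as a black box to guarantee a prime $q\equiv 1\pmod 3$ in the window \eqref{qint}. The relevant question is therefore whether your argument stands on its own, and it does not. Your reduction is sound: the inequality $\theta(2m;3,1)-\theta(m;3,1)\ge m\bigl(\tfrac12-3\epsilon\bigr)$ is computed correctly, and positivity for $\epsilon<1/6$ together with a finite check below $x_0$ would indeed yield the lemma. But the entire content of the statement lies in exhibiting an explicit pair $(\epsilon,x_0)$ with $\epsilon<1/6$ and with $x_0$ small enough that the range $4<m<x_0$ can actually be verified; you never exhibit one. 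You assert that such bounds ``follow from explicit estimates for $\psi(x;q,a)$'' and then concede in your final paragraph that bridging this gap ``is exactly what the cited result supplies.'' That concession makes the argument circular as a proof of the cited lemma: you have reduced the statement to itself. To close the gap you would need either to invoke a concrete explicit estimate for $\theta(x;3,1)$ (such bounds exist in the literature, e.g.\ Ramar\'e--Rumely's explicit prime-counting estimates for small moduli, which do give $\epsilon<1/6$ from a computable threshold) and then actually carry out the finite verification, or to reproduce an elementary argument of Breusch's type.

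A secondary point: your sketch mischaracterizes the method behind the citation. Breusch's proof is elementary, in the Chebyshev--Erd\H{o}s style of the classical proof of Bertrand's postulate (estimates on factorizations of binomial coefficients, adapted to the progressions modulo $3$ and $4$); it is not an effective-PNT-for-arithmetic-progressions argument of the kind you describe, so the ``technical heart'' you defer to the reference is not in fact what the reference contains. Your closing observation about why the class $1\bmod 3$ is the relevant one---these are the rational primes that split in $\mathbb{Z}[\omega]$, so that $q=|\varrho|^2$ and $\mathbb{Z}[\omega]/\varrho\mathbb{Z}[\omega]\cong\mathbb{F}_q$---is correct and consistent with how the paper uses the lemma.
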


We would also like to note that from (\ref{qkncond1}), the growth of $q$ is $O(n^{\frac{1}{\beta}})$. Thus,
\begin{align}\label{dgo01}
 \underset{{n\rightarrow\infty}}{\text{lim}}\hspace{0.2in}n/q=0.
\end{align}

\subsection{Proof: Existence of $\mathbb{Z}[\omega]$-lattices that are good for covering}\label{sec:GfC}
The proof of this theorem is divided into two parts. In the first part, sufficient conditions are obtained such that most Eisenstein lattices in the ensemble are ``almost complete" $\mathcal{V}$-coverings. In the second part, stricter conditions are imposed such that most of the Eisentein lattices in the ensemble are $complete$ $\mathcal{V}$-coverings and thus $complete$ coverings .

\textbf{Part I: Almost complete covering}\\

Denote $d$ to be half of the largest distance between any two points that lie within the Voronoi region of an element in $\text{GRID}$.
\begin{align}\label{dgo02}
d=\sqrt{\frac{n}{3q}}.
\end{align}
Note that by (\ref{qint}), $d\rightarrow 0$ as $n\rightarrow\infty$. 

Consider the lattice constellation $\Lambda^*$ of the ensemble and define $k_1,k_2$ such that $k_1+k_2=k$. We shall denote the Eisenstein lattice constellation obtained from the first $k_1$ columns of $\mathbf{G}$ by $\Lambda^*[k_1]$ and let $\Lambda^*[k_1+j],j=1,\cdots,k_2$ denote the Eisenstein lattice constellation obtained from the first $k_1+j$ columns of $\mathbf{G}$. Let $\underline{x}$ be an arbitrary point such that $\underline{x}\in\mathcal{V}$. Let $\mathcal{S}_1(\underline{x})$ denote the set of GRID points within a modulo distance $r-d$ from $\underline{x}$ where $d$ was defined in (\ref{dgo02}).
\begin{align}
\mathcal{S}_1(\underline{x})=\text{GRID}^*\cap\left(\underline{x}+\mathcal{B}(r-d) \right)^*.
\end{align}
Furthermore, denote $\mathcal{S}_2(\underline{x})$ to be the set of GRID points such that their Voronoi regions intersect a sphere of radius $r-2d$ centered at $\underline{x}$.
\begin{align}
\mathcal{S}_2(\underline{x})=\left\{ \underline{y}\in \text{GRID}^*:\left( \underline{y}+\varrho^{-1}\mathcal{V}\right)\cap\left( \underline{x}+\mathcal{B}(r-2d) \right)^* \right\}.\nonumber\\
\end{align}
It can be observed that $\mathcal{S}_2(\underline{x})\subset\mathcal{S}_1(\underline{x})$. Thus, the cardinality of $\mathcal{S}_1(\underline{x})$ can be bounded as:
\begin{align}
|\mathcal{S}_1(\underline{x})|&\geq |\mathcal{S}_2(\underline{x})|\geq\left\lceil V_{\mathcal{B}}(r-2d)/\text{Vol}(\varrho^{-1}\mathcal{V}) \right\rceil\nonumber\\
&=\left\lceil q^n (\sqrt{3}/2)^{-n}V_{\mathcal{B}}(r-2d)\right\rceil.
\end{align}
By the second property of the ensemble, the probability that $\underline{x}$ is covered by a sphere of radius $(r-d)$ centered at any point of $\Lambda^*{[k_1]}$ satisfies
\begin{align}\label{lbacc}
\text{Pr}\left\{ \underline{x}\in\left( \Lambda_i^*{[k_1]}+\mathcal{B}(r-d)\right)^* \right\}&=\nonumber\\
|\mathcal{S}_1(\underline{x})|/q^n&\geq (\sqrt{3}/2)^{-n}V_{\mathcal{B}}(r-2d),\nonumber\\
\end{align}
for $i=1,\cdots,M_1-1$ where $M_1=q^{k_1}$ and $\Lambda_i^*$ is the $i$th point of $\Lambda^*$. The indicator random variable $\eta_i$ for $i=1,\cdots,M_1-1$ is defined as
\[
 \eta_i=\eta_i(\underline{x})
  \begin{cases}
   1, & \text{if } \underline{x}\in \left( \Lambda_i^*{[k_1]}+\mathcal{B}(r-d)\right)^* \\
   0,       & \text{otherwise }
  \end{cases}
\]
Note that $i=0$ is not considered since $\Lambda_0^*[k_1]=0$ deterministically. Thus, $\eta_i$ is statistically independent of both $i$ and $\underline{x}$. Define $\mathcal{X}=\mathcal{X}(\underline{x})$ as follows:
\begin{align}
\mathcal{X}=\sum_{i=1}^{M_1-1}\eta_i.
\end{align}
Hence, $\mathcal{X}$ is equal to the number of nonzero codewords $(r-d)$-covering $\underline{x}$. Computing the expectation of $\mathcal{X}$ and using the lower bound from (\ref{lbacc}),
\begin{align}\label{lbacc2}
E(\mathcal{X})&=\sum_{i=1}^{M_1-1}E(\eta_i) \nonumber \\
&\geq\left(M_1-1\right)(\sqrt{3}/2)^{-n}V_{\mathcal{B}}(r-2d).
\end{align}
Since the $\eta_i$'s are pairwise independent and thus uncorrelated, similar to \cite{erez2005lattices} one has
\begin{align}\label{cheby}
\text{Var}(\mathcal{X})\leq E(\mathcal{X}).
\end{align}
Using (\ref{cheby}), by Chebyshev's inequality, for any $\nu>0$
\begin{equation}\label{cheby2}
\text{Pr}\left\{ |\mathcal{X}-E(\mathcal{X})|>2^{\nu}\sqrt{E(\mathcal{X})} \right\}<\frac{\text{Var}(\mathcal{X})}{2^{2\nu}E(\mathcal{X})}\leq2^{-2\nu}.
\end{equation}
Define
\begin{equation}
\mu(\nu)=E(\mathcal{X})-2^{\nu}\sqrt{E(\mathcal{X})}.
\end{equation}
Then from (\ref{cheby2}),
\begin{equation}\label{cheby4}
\text{Pr}\{\mathcal{X}<\mu(\nu)\}<2^{-2\nu}.
\end{equation}
If $\mu(\nu)\geq1$, $\text{Pr}\{\mathcal{X}<1\}$ is upper-bounded by $2^{-2\nu}$ as well.

A point $\underline{x}\in\mathcal{V}$ will be referred as \emph{remote} from a discrete set of points $\mathcal{A}$ if it is not $r-d$-covered by $\left(\mathcal{A}+\mathcal{B}(r-d)\right)^*$, i.e. if $\underline{x}$ does not belong to an $(r-d)$- sphere centered at any point of $\mathcal{A}$. Therefore, $\mathcal{X}(\underline{x})<1$ implies that ``$\underline{x}$ is remote from $\Lambda^*[k_1]$''. Define $\mathcal{Q}\left(\mathcal{A} \right)$ to be the set of (continuous) points which are remote from the discrete set $\mathcal{A}$. Denote $\mathcal{Q}_i=\mathcal{Q}\left( \Lambda^*[k_1+i]\right),i=0,1,\cdots,k_2$ and define
\begin{equation}
q_i=|\mathcal{Q}_i|/\text{Vol}\left(\mathcal{V}\right),
\end{equation}
to be the fraction of (continuous) points in $\mathcal{V}$ which are remote from $\Lambda^*[k_1+i]$. Then,
\begin{align}
|\mathcal{Q}_0|&=\int_{\mathcal{V}}{\mathbf{1}\left(\mathcal{X}(\underline{x})<1 \right)d{\underline{x}}}\\
&\leq\int_{\mathcal{V}}{\mathbf{1}\left(\mathcal{X}(\underline{x})< \mu(\nu) \right)d{\underline{x}}},
\end{align}
under the condition that $\mu(\nu)>1$. Then, from (\ref{cheby4}) we have
\begin{equation}\label{cheby3}
E(q_0)<2^{-2\nu}.
\end{equation}
Applying Markov's inequality we get
\begin{equation}
\text{Pr}\{q_0>2^{\nu}E(q_0)\}<2^{-\nu}.
\end{equation}
Using (\ref{cheby3}),
\begin{equation}\label{ineq_q0}
\text{Pr}\{q_0>2^{-\nu}\}<2^{-\nu}.
\end{equation}
Therefore, by taking $\nu\rightarrow\infty$ and keeping $\mu(\nu)\geq1$, this probability can be made arbitrarily small as $n\rightarrow\infty$. In order to satisfy these constraints it is sufficient to take $\nu=o(\log n)$ and $E(\mathcal{X})>n^{\lambda}$ for some $\lambda>0$. By (\ref{lbacc2}) this would be satisfied if we choose a radius $r$ such that
\begin{equation}\label{almcompend}
q^{k_1}-1=\frac{n^{\lambda}}{V_{\mathcal{B}}(r-2d)}\left(\sqrt{3}/2\right)^n.
\end{equation}
Hence, we conclude that for these choice of parameters, for most lattices chosen from the $\left(n,k,q,\mathbb{Z}[\omega]\right)$ ensemble, \emph{almost all} points are covered by spheres of radius $r-d$.\\

\textbf{Part II:  Complete covering}\\

We would like to obtain an ensemble of $\mathbb{Z}[\omega]$-lattices such that most of its members are able to cover all the points in $\mathcal{V}$. $\mathcal{Q}(\mathcal{A})$ is redefined to be the set of $\text{GRID}^*$ points, i.e., $\underline{x}\in\text{GRID}^*$ which are remote from $\mathcal{A}$ and $q_i$ is redefined to be the fraction of $\text{GRID}^*$ points that are remote from $\Lambda^*[k_1+i]$. Therefore, an $(r-d)$-covering of all GRID points implies an $r$-covering of all points in $\mathcal{V}$.

By augmenting the generator matrix $\mathbf{G}$ with an additional small number of columns $k_2(k_2\ll k_1)$, the fraction of uncovered $\text{GRID}^*$ points can be made smaller than $1/|\text{ GRID}^*|$ which implies that all $\text{ GRID}$ points are $r-d$-covered. We proceed as follows.

Choose $k_1$ and $q$ such that $k_1$ grows faster than $\log^2 n$ and \eqref{qkncond1} and \eqref{qkncond2} are satisfied. Define the set
\begin{equation}
\mathcal{S} = \Lambda^*[k_1]\cup \left(\Lambda^*[k_1]+\left\{\sigma^{-1}(\mathbf{G}_{k_1+1})\cap\mathcal{V}\right\}\right),
\end{equation}
where $\sigma$ is the ring isomorphism defined in section~\ref{sec:ConstrAEis}. Also note that,
\begin{equation}
\Lambda^*[k_1+1]=\bigcup_{m=0}^{q-1}\left(\Lambda^{*}[k_1]+\sigma^{-1}\left(\left[m\cdot(\mathbf{G}_{k_1+1})\right]\mod q\right) \right).
\end{equation}
Hence, $\mathcal{S}\subset\Lambda^*[k_1+1]$ and $q_1$ is upper-bounded by $\frac{\mathcal{Q}\left(\mathcal{S}\right)}{|\text{GRID}|^*}$. Since $\Lambda^*[k_1]+\left\{\sigma^{-1}(\mathbf{G}_{k_1+1})\cap\mathcal{V}\right\}$ is an independent shift of $\Lambda^*[k_1]$, conditioned on $\Lambda^*[k_1]$, the event that $\underline{x}$ is remote from $\Lambda^*[k_1]+\left\{\sigma^{-1}(\mathbf{G}_{k_1+1})\cap\mathcal{V}\right\}$ is independent from whether $\underline{x}$ is remote from $\Lambda^*[k_1]$ and the probability of such an event is $q_0$. Then,
\begin{equation}
E \left\{ \frac{|\mathcal{Q}(\mathcal{S})|}{|\text{GRID}^*|}\Big|q_0 \right\}=q_0^2.
\end{equation}
Due to the fact that $\mathcal{S}\subset\Lambda^*[k_1+1]$, we have $E\left\{ q_1|q_0\right\}\leq q_0^2$. By Markov's inequality,
\begin{equation}
\text{Pr}\Big\{ q_1>2^{\gamma}E(q_1|q_0)\Big|q_0\Big\}.
\end{equation}
Therefore,
\begin{equation}
\text{Pr}\Big\{ q_1\leq2^{\gamma-2\nu}\Big|q_0\leq2^{-\nu}\Big\}\geq1-2^{-\gamma}.
\end{equation}
From Bayes' rule and \eqref{ineq_q0},
\begin{align}
\text{Pr}\Big\{ q_1\leq2^{\gamma-2\nu}\Big\}&\geq\text{Pr}\Big\{ q_1<2^{\gamma-2\nu},q_0\leq2^{-\nu}\Big\}\\
&\geq\left(1-2^{-\gamma}\right)\left(1-2^{-\nu}\right).
\end{align}
Repeating this procedure for $l=0,1,\dots,k_2-1$, we obtain
\begin{align}
q_{l+1}&\leq2^{\gamma}E(q_{l+1}|q_l)\\
&\leq2^{\gamma}q_l^2,
\end{align}
with probability at least $1-2^{-\gamma}$. Hence, the intersection of all these $k_2$ events and the event that $q_0<2^{-\nu}$ has the probability $\left(1-2^{-\nu} \right)\left(1-2^{-\gamma} \right)^{k_2}$, which implies
\begin{equation}
q_{k_2}\leq 2^{2^{k_2}(\gamma-\nu)-\gamma}.
\end{equation}
We would like to choose $k_2$ such that
\begin{equation}\label{qk2bound}
q_{k_2}< q^{-n}=2^{-n\log{q}}.
\end{equation}
The interpretation of \eqref{qk2bound} is $q_{k_2}=0$ since there are $q^n$ points in $\text{GRID}^*$. Therefore, choosing $\gamma=\nu-1$ and
\begin{align}
k_2=\lceil\log{n}+\log{\log{q}}\rceil,
\end{align}
or faster suffices. Due to the fact that $k=k_1+k_2$, we conclude that with probability at least
\begin{equation}\label{newvcond}
\left(1-2^{-\nu}\right)\left(1-2^{-\nu+1}\right)^{(\log{n}+\log{\log{q}})}
\end{equation}
$\Lambda^*[k]$ satisfies $q_{k_2}<q^{-n}$, in other words every $\underline{x}\in \text{GRID}^*$ is covered by at least one sphere of radius $(r-d)$. We would like to impose a condition on $\nu$ such that both $\nu\rightarrow\infty$ and the probability in \eqref{newvcond} goes to 1 as $n\rightarrow\infty$. It suffices to choose
\begin{equation}
\nu=2\log\left( \log{n}+\log{\log{q}}\right).
\end{equation}
Note that as $\mu(\nu)\geq1$, the probability that there remains a point $\underline{x}\in\text{GRID}^*$ that is not $(r-d)$-covered is arbitrarily small as $n\rightarrow\infty$. If every point of $\text{GRID}^*$ is $(r-d)$-covered, then $\mathcal{V}$ is $r$-covered. Thus, the probability of a complete covering with spheres of radius $r$ goes to 1 where $r$ satisfies(see \eqref{almcompend})
\begin{align}
M=q^{k_1+k_2}&=\frac{n^\lambda}{V_{\mathcal{B}}(r-2d)}\left(\sqrt{3}/2\right)^n q^{k_2}\label{compend1}\\
&\leq\frac{n^\lambda}{V_{\mathcal{B}}(r-2d)}\left(\sqrt{3}/2\right)^nq^{(\log{n}+\log{\log{q}})+1}\\
&=\frac{n^\lambda}{V_{\mathcal{B}}(r-2d)}\left(\sqrt{3}/2\right)^n 2^{\log{q}[(\log{n}+\log{\log{q}})+1]}.\label{compend2}
\end{align}
From \eqref{compend1} and \eqref{compend2},
\begin{align}
\frac{r}{r_{\Lambda}^{\text{eff}}}&=\sqrt[2n]{\frac{V_{\mathcal{B}}(r)}{V_{\mathcal{B}}(r-2d)}n^{\lambda}q^{k_2}}\label{compend3}\\
&\leq\left(\frac{r}{r-2d}\right)\cdot n^{\lambda/2n}\cdot2^{(\log{q}\log{n}+\log{q}\log{\log{q}}+\log{q})/2n}.\label{compend4}
\end{align}
For $\rho_{\text{cov}}\rightarrow1$, the left-hand side of \eqref{compend3} should go to 1. Hence, we require each of the three terms on the right-hand side of \eqref{compend4} goes to 1. From \eqref{dgo01} and \eqref{dgo02}, it follows that $d\rightarrow0$ as $n\rightarrow\infty$ provided that $k\leq\beta n$ and $\beta<1$. Therefore,
\begin{equation}
\lim_{n\rightarrow\infty}\left(\frac{r}{r-2d}\right)=1.
\end{equation}
For any fixed $\lambda>0$, we have $\lim_{n\rightarrow\infty}n^{\lambda/2n}=1$. Also, since $k$ grows faster than $\log^2n$, by~\eqref{qkncond1} we have $\log{p}$ grows slower than $o\log(n/\log{n})$. Then,
\begin{equation}
\lim_{n\rightarrow\infty}2^{(\log{q}\log{n}+\log{q}\log{\log{q}}+\log{q})/2n}=1.
\end{equation}
Thus, we have that $\frac{r_{\Lambda}^{\text{cov}}}{r_{\Lambda}^{\text{eff}}}\rightarrow1$ in probability as $n\rightarrow\infty$ which completes the proof.

\subsection{Proof: Existence of good nested $\mathbb{Z}[\omega]$-lattices  }\label{sec:GFACC}

Using our result from Theorem~\ref{covering}, let $\Lambda$ be an $n$-dimensional $\mathbb{Z}[\omega]$-lattice
obtained through Construction-A  with a corresponding generator matrix $\mathbf{B}$ which is good for covering.

\begin{definition}
A set $\mathcal{C}$ of linear $(n,k)$ linear code over $\mathbb{F}_q^n$ is \emph{balanced} if every nonzero element of $\mathbb{F}_q^n$ is contained in the same number, denoted by $N_{\mathcal{C}}$ of codes from $\mathcal{C}$.
\end{definition}

Note that for fixed $n,$ $k,$ and $q$, the set of all linear $(n,k)$ codes over $\mathbb{F}_q$ is balanced. We shall now state Lemma 1 in \cite{loeliger}.
\begin{lemma}\label{bal}
Let $f(\cdot)$ be an arbitrary mapping $\mathbb{F}_q^n\rightarrow\mathbb{R}$ and let $\mathcal{C}$ be a balanced set of linear $(n,k)$ codes over $\mathbb{F}_q$. Then, the average over all linear codes $C$ in $\mathcal{C}$ of the sum $\sum_{c\in C^{\prime}} f(c)$ is given by
\begin{align}
    \frac{1}{\mathcal{C}}\sum_{C\in\mathcal{C}}\sum_{c\in C^{\prime}}f(c)=\frac{q^k-1}{q^n-1}\sum_{v\in\left(\mathbb{F}_q^n\right)^{\prime}}f(v).
\end{align}
\end{lemma}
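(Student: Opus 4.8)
The plan is to prove this identity by an interchange-of-summation (double counting) argument, exploiting the balanced property of $\mathcal{C}$ to decouple the dependence on the individual codes. First I would rewrite the double sum $\sum_{C\in\mathcal{C}}\sum_{c\in C'}f(c)$ by grouping terms according to the nonzero vector $v=c$ rather than the code $C$. For each fixed $v\in(\mathbb{F}_q^n)'$, the contribution $f(v)$ appears once for every code $C\in\mathcal{C}$ with $v\in C$. By the definition of a balanced set, the number of such codes is exactly $N_{\mathcal{C}}$, independent of the choice of $v$. Hence
\begin{align}
\sum_{C\in\mathcal{C}}\sum_{c\in C'}f(c)=N_{\mathcal{C}}\sum_{v\in(\mathbb{F}_q^n)'}f(v).
\end{align}

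The remaining task is to pin down $N_{\mathcal{C}}$, which I would do by counting the pairs $(C,c)$ with $C\in\mathcal{C}$ and $c\in C'$ in two ways. Summing over codes first, every linear $(n,k)$ code has exactly $q^k$ elements and thus $q^k-1$ nonzero elements, giving $|\mathcal{C}|(q^k-1)$ pairs. Summing over nonzero vectors first, there are $q^n-1$ vectors in $(\mathbb{F}_q^n)'$, each lying in $N_{\mathcal{C}}$ codes, giving $(q^n-1)N_{\mathcal{C}}$ pairs. Equating the two counts yields
\begin{align}
N_{\mathcal{C}}=\frac{|\mathcal{C}|(q^k-1)}{q^n-1}.
\end{align}

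Substituting this value back and dividing by $|\mathcal{C}|$ gives the claimed average, completing the argument. I expect no serious obstacle here: the only nontrivial ingredient is the balanced hypothesis, which is precisely what makes the inner count $N_{\mathcal{C}}$ uniform across all nonzero vectors and therefore lets the sum over codes factor cleanly. The observation recorded just after the definition in the excerpt, namely that the set of all linear $(n,k)$ codes over $\mathbb{F}_q$ is itself balanced, is what will ultimately allow this lemma to be applied to the full Construction-A ensemble when establishing the Minkowski--Hlawka-type averaging used in the proof of Theorem~\ref{GfACC}.
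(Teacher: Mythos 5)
Your proof is correct, and it is essentially the same argument as the one behind the paper's statement: the paper gives no proof of its own here (the lemma is quoted as Lemma 1 of \cite{loeliger}), and your interchange-of-summation plus incidence-counting argument is exactly the standard proof from that reference. The balanced hypothesis enters precisely where you use it --- making the number of codes containing a fixed nonzero $v$ independent of $v$ --- and the value $N_{\mathcal{C}}=|\mathcal{C}|\left(q^k-1\right)/\left(q^n-1\right)$ follows, as you say, from counting the pairs $(C,c)$ with $c\in C^{\prime}$ in two ways, using the convention that an $(n,k)$ code has dimension exactly $k$ and hence $q^k-1$ nonzero codewords.
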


For proving Theorem~\ref{GfACC}, we shall use nested $\mathbb{Z}[\omega]$-lattices obtained from Construction-A as mentioned in Section~\ref{Sec:NestedZw}.
A scaled version of $\Lambda_C$ denoted as $\gamma\Lambda_C$, where $\gamma\in\mathbb{R}^{+}$ and $\Lambda_C$ was defined in section~\ref{sec:ConstrAEis} is constructed.
 Then, we multiply $\gamma\Lambda_C$ with the generator matrix $\mathbf{B}$ and obtain the lattice $\Lambda_f=\gamma\mathbf{B}\Lambda_C$. It can be observed that
$\gamma\varrho\mathbb{Z}[\omega]^n\subset\gamma\varrho\Lambda\subset\Lambda_f$ and there are $q^k$ elements of $\Lambda_f$ that lie within the
fundamental Voronoi region of $\gamma\varrho\Lambda$. Hence, the volume of the fundamental region
of $\Lambda_f$ is
\begin{align}
\text{Vol}\left(\mathcal{V}_{\Lambda_f}\right)=\gamma^{2n}q^{n-k}\left(\frac{\sqrt{3}}{2}\right)^n\text{Vol}\left(\mathcal{V}_{\Lambda}\right).
\end{align}
We can now extend the Minkowski-Hlawka Theorem in \cite{loeliger} to Eisenstein lattices as follows, following similar steps.


\begin{theorem}(\emph{Minkowski-Hlawka Theorem:})
Let $f$ be a Riemann integrable function $\mathbb{R}^{2n}\rightarrow\mathbb{R}$ of bounded support(i.e., $f(v)=0$ (if $\|v\|$ exceeds some bound). Then for any integer $k$ where $0<k<n$, and any fixed $\text{Vol}(\mathcal{V}_{\Lambda_f})$, the approximation
\begin{align}
\frac{1}{\mathcal{C}} \sum_{C\in\mathcal{C}}\hspace{0.03in}\sum_{v\in g(\gamma\mathbf{B}\Lambda_C^{\prime})}f(v)\approx {\text{Vol}(\mathcal{V}_{\Lambda_f})}^{-1}\int_{\mathbb{R}^{2n}}f(v)dv,
\end{align}
where $\mathcal{C}$ is any balanced set of linear $(n,k)$ codes over $\mathbb{F}_q$ and where $g(\cdot):\mathbb{C}^n\rightarrow\mathbb{R}^{2n}$ as
in \eqref{ctormap}, becomes exact in the limit $q\rightarrow\infty$,
 $\gamma\rightarrow0$, $\gamma^{2n}q^{n-k}\left(\frac{\sqrt{3}}{2}\right)^n\text{Vol}\left(\mathcal{V}_{\Lambda}\right)=\text{Vol}\left(\mathcal{V}_{\Lambda_f}\right)$ fixed. Note that these
conditions imply that $\gamma q\rightarrow\infty$.
\end{theorem}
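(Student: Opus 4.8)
The plan is to follow Loeliger's argument, using Lemma~\ref{bal} to collapse the average over the balanced ensemble into a single deterministic sum, and then to recognize that sum as a Riemann sum converging to the stated integral in the coupled limit. First I would reorganize the inner sum over $\Lambda_C^{\prime}=\Lambda_C\setminus\{0\}$ according to the underlying codewords. Since $\Lambda_C=\{\lambda\in\mathbb{Z}[\omega]^n:\tilde\sigma(\lambda)\in C\}$ is a subgroup of $\mathbb{Z}[\omega]^n$ containing $\varrho\mathbb{Z}[\omega]^n$, every $\lambda\in\Lambda_C$ decomposes uniquely as $\lambda=\lambda_c+\varrho z$ with $c=\tilde\sigma(\lambda)\in C$, a fixed coset representative $\lambda_c$, and $z\in\mathbb{Z}[\omega]^n$. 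Defining
\[
\Phi(c)\triangleq\sum_{z\in\mathbb{Z}[\omega]^n} f\bigl(g(\gamma\mathbf{B}(\lambda_c+\varrho z))\bigr),
\]
the inner sum equals $\sum_{c\in C}\Phi(c)$, and the quantity to be averaged is this minus the isolated origin term $f(\underline 0)$. Splitting off $\Phi(0)$, which is independent of $C$, and applying Lemma~\ref{bal} to $\Phi$ over $\mathcal{C}$ gives
\[
\frac{1}{|\mathcal{C}|}\sum_{C\in\mathcal{C}}\sum_{c\in C^{\prime}}\Phi(c)=\frac{q^k-1}{q^n-1}\sum_{v\in(\mathbb{F}_q^n)^{\prime}}\Phi(v).
\]

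Next I would observe that as $v$ ranges over $\mathbb{F}_q^n$ and $z$ over $\mathbb{Z}[\omega]^n$, the points $\lambda_v+\varrho z$ enumerate $\mathbb{Z}[\omega]^n$ exactly once, so $\sum_{v\in\mathbb{F}_q^n}\Phi(v)=T$ where $T\triangleq\sum_{w\in\mathbb{Z}[\omega]^n}f(g(\gamma\mathbf{B}w))$. Collecting terms, the ensemble average equals $\tfrac{q^k-1}{q^n-1}T+\tfrac{q^n-q^k}{q^n-1}\Phi(0)-f(\underline 0)$. In the limit $q\to\infty$ the prefactor $\tfrac{q^k-1}{q^n-1}$ behaves as $q^{k-n}$, while $T$ is a Riemann sum for $f$ over the lattice $\gamma\mathbf{B}\mathbb{Z}[\omega]^n$, whose fundamental cell has some volume $V_{\mathrm{lat}}$. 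Since $\Lambda_f=\gamma\mathbf{B}\Lambda_C$ has index $q^{n-k}$ in $\gamma\mathbf{B}\mathbb{Z}[\omega]^n$, the volume normalization reads $\text{Vol}(\mathcal{V}_{\Lambda_f})=q^{n-k}V_{\mathrm{lat}}$. As $\gamma\to0$ refines the lattice and $f$ is Riemann integrable of bounded support, $V_{\mathrm{lat}}T\to\int_{\mathbb{R}^{2n}}f(v)\,dv$, so that
\[
q^{k-n}T\to q^{k-n}V_{\mathrm{lat}}^{-1}\int_{\mathbb{R}^{2n}}f(v)\,dv=\text{Vol}(\mathcal{V}_{\Lambda_f})^{-1}\int_{\mathbb{R}^{2n}}f(v)\,dv,
\]
which is exactly the desired right-hand side.

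It remains to dispose of the residual terms. The coefficient $\tfrac{q^n-q^k}{q^n-1}\to1$, so the surviving correction is $\Phi(0)-f(\underline 0)$, where $\Phi(0)=\sum_{z}f(g(\gamma\varrho\mathbf{B}z))$ is a sum over the lattice $\gamma\varrho\mathbf{B}\mathbb{Z}[\omega]^n$. The normalization forces $\gamma\sqrt q\to\infty$ (with $\gamma^{2n}q^{n-k}$ fixed one gets $\gamma\sqrt q\sim q^{k/2n}\to\infty$, which is also consistent with the stated $\gamma q\to\infty$), and since multiplication by $\gamma\varrho$ is a similarity of magnitude $\gamma\sqrt q$, this lattice has minimum distance $\gamma\sqrt q\,\delta_0\to\infty$, where $\delta_0>0$ is the minimum distance of $\mathbf{B}\mathbb{Z}[\omega]^n$. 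Because $f$ has bounded support, eventually only $w=\underline 0$ contributes, so $\Phi(0)\to f(\underline 0)$ and the correction cancels. Hence the average converges to $\text{Vol}(\mathcal{V}_{\Lambda_f})^{-1}\int_{\mathbb{R}^{2n}}f(v)\,dv$, as claimed.

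The main obstacle I anticipate is making the coupled limit precise: the single limit $q\to\infty$, $\gamma\to0$ with $\gamma^{2n}q^{n-k}(\sqrt3/2)^n\text{Vol}(\mathcal{V}_\Lambda)$ held fixed must simultaneously (i) refine $\gamma\mathbf{B}\mathbb{Z}[\omega]^n$ finely enough that $q^{k-n}T$ converges to the integral, and (ii) coarsen $\gamma\varrho\mathbf{B}\mathbb{Z}[\omega]^n$ so that $\Phi(0)$ collapses to a single term. The delicate point is that these two effects pull in opposite directions yet are controlled by the same parameters; verifying $\gamma\sqrt q\to\infty$ from the volume constraint and controlling the Riemann-sum error for a merely Riemann-integrable $f$ across the joint limit is where care is needed. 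The full-rank hypothesis on $\mathbf{B}$ enters precisely to guarantee both lattices are nondegenerate, so that $\delta_0>0$ and the covolume and minimum-distance estimates above hold.
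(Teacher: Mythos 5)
Your proposal is correct and is essentially the paper's own argument (following Loeliger): your $\Phi(c)$ is exactly the inner sum $\sum_{v:\tilde\sigma(v)=c}f(\gamma\mathbf{B}v)$ to which the paper applies Lemma~\ref{bal}, your term $\tfrac{q^k-1}{q^n-1}T$ is its Riemann-sum term, and your residual $\Phi(0)-f(\underline{0})$ is precisely the kernel-coset term that the paper argues vanishes by bounded support. The only differences are bookkeeping, plus your sharper (and valid) observation that over $\mathbb{Z}[\omega]$ the binding condition for killing the kernel term is $\gamma\sqrt{q}\to\infty$ (since $|\varrho|=\sqrt{q}$) rather than $\gamma q\to\infty$, both of which indeed follow from the fixed-volume constraint.
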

\begin{proof}
\begin{align}
&\frac{1}{|\mathcal{C}|}\sum_{C\in\mathcal{C}}\hspace{0.03in}\sum_{v\in g(\gamma\mathbf{B}\Lambda_C^{\prime})}f(v)\\
&=\frac{1}{|\mathcal{C}|}\sum_{C\in\mathcal{C}} \Big[\sum_{v\in g\left(\left(\mathbb{Z}[\omega]^n\right)^{\prime}\right):\tilde\sigma(v)=0}f(\gamma\mathbf{B}v)\hdots\nonumber\\
&\hdots\hspace{0.08in}+\sum_{v\in g\left(\mathbb{Z}[\omega]^n\right):\tilde\sigma(v)\in C^{\prime}}f(\gamma\mathbf{B} v)\Big]\\
&=\sum_{v\in\left(g\left(\mathbb{Z}[\omega]^n\right)^{\prime}\right):\tilde\sigma(v)=0}f(\gamma \mathbf{B}v)\nonumber\\\label{MH1}
&+\frac{1}{|\mathcal{C}|}\sum_{C\in\mathcal{C}} \sum_{c\in\mathcal{C}^{\prime}}\left[ \sum_{v\in g\left(\mathbb{Z}[\omega]^n\right):\tilde\sigma(v)=c} f(\gamma\mathbf{B}v) \right]\\\label{MH2}
&=\sum_{v\in g\left(\left(\mathbb{Z}[\omega]^n\right)^{\prime}\right):\tilde\sigma(v)=0}f(\gamma\mathbf{B}v)\nonumber\\
&+\frac{q^k-1}{q^n-1}\sum_{c\in(\mathbb{F}_q^n)^{\prime}}\left[ \sum_{v\in g\left(\mathbb{Z}[\omega]^n\right):\tilde\sigma(v)=c} f(\gamma\mathbf{B}v) \right]\\\label{MH3}
&=\sum_{v\in g\left(\left(\mathbb{Z}[\omega]^n\right)^{\prime}\right):\tilde\sigma(v)=0}f(\gamma \mathbf{B}v)\nonumber\\
&+\frac{q^k-1}{q^n-1}\sum_{v\in g\left(\mathbb{Z}[\omega]^n\right):\tilde\sigma(v)\neq0}f(\gamma\mathbf{B}v),\\
\nonumber
\end{align}
where the step from \eqref{MH1} to \eqref{MH2} is due to Lemma \ref{bal} and due to the fact that $f$ has bounded support,
the left term of \eqref{MH3} vanishes for sufficiently large $\gamma q$ and the right term of \eqref{MH3} becomes
\begin{align}\label{MHTres}
\frac{q^k-1}{q^n-1}\sum_{v\in g\left((\mathbb{Z}[\omega]^n)^{\prime}\right)}f(\gamma \mathbf{B}v)&\approx\nonumber\\
&\gamma^{-2n}q^{k-n}\left(\frac{2}{\sqrt{3}}\right)^{n}\text{Vol}(\mathcal{V}_\Lambda)^{-1}\int_{\mathbb{R}^{2n}}f(v)dv,
\end{align}
which becomes exact in the limit as $\gamma\rightarrow0$, $\gamma q\rightarrow\infty$,
 i.e, a Riemann sum approaching to a Riemann integral. Note that the term $\gamma^{-2n}q^{k-n}\left(\frac{2}{\sqrt{3}}\right)^{n}$
appears in front of the integral in \eqref{MHTres} since it is the reciprocal of the volume of the fundamental Voronoi region of $\Lambda_f=\gamma\mathbf{B}\Lambda_C$.
\end{proof}

Suppose now that a transmitter selects a codeword $\underline{x}$ from an Eisenstein lattice $\Lambda\in\mathbb{C}^n$ (or equivalently $\mathbb{R}^{2n}$) and $\underline{x}$ is transmitted over an AWGN channel where a random noise vector $\underline{z}\in\mathbb{C}^n$(or equivalently $\mathbb{R}^{2n}$) gets added with the variance of each $2n$ components equal to ${P_{\underline{z}}}/2$. The receiver obtains $\underline{y}=\underline{x}+\underline{z}$ and tries to recover $\underline{x}$. Furthermore, let $E\subset\mathbb{R}^{2n}$ be a set of typical noise vectors. We say that an \emph{ambiguity} occurs if $\underline{y}$ can be written in more than one way as $\underline{y}=\underline{x}+\underline{e}$ where $\underline{x}\in\Lambda$ and $\underline{e}\in E$. Let $P_{\text{amb}|E}$ be the probability of ambiguity given that $\underline{z}\in E$. Assuming that the receiver is able to recover $\underline{x}$ whenever $\underline{z}\in E$ and there is no ambiguity, the probability of decoding error is upper-bounded by
\begin{align}\label{pamb}
P_e\leq P_{\text{amb}|E}+P(\underline{z}  \notin E).
\end{align}
Due to the fact that Minkowski-Hlawka theorem can be proven for $\Lambda_f$, the following theorem immediately follows.\cite{loeliger}

\begin{theorem}
Let $E$ be a Jordan measurable bounded subset of $\mathbb{R}^{2n}$ and let $k$ be an integer such that $0<k<n$. Then, for any $\delta>0$, for all sufficiently large $q$, and for all sufficiently small $\gamma$, the arithmetic average of $P_{\text{amb}|E}$ over all lattices $\Lambda_f=\gamma\mathbf{B}\Lambda_C$, $C\in\mathcal{C}$, which
we denote as $\overline{P_{\text{amb}|E}}$, is bounded by
\begin{align}
\overline{P_{\text{amb}|E}}<(1+\delta)\text{Vol}(E)/\text{Vol}\left(\mathcal{V}_{\Lambda_f}\right),
\end{align}
where $\mathcal{C}$ is any balanced set of linear $(n,k)$ codes over $\mathbb{F}_q$ and where $\text{Vol}\left(\mathcal{V}_{\Lambda_f}\right)\triangleq\gamma^{2n}q^{n-k}\text{Vol}(\mathcal{V}_\Lambda)\left(\frac{\sqrt{3}}{2}\right)^n$
is the fundamental volume of the lattices $\Lambda_f=\gamma\mathbf{B}\Lambda_C$, $C\in\mathcal{C}$.
\end{theorem}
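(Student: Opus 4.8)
The plan is to collapse the averaged ambiguity probability into a \emph{single} lattice sum and then invoke the Minkowski-Hlawka theorem just established. First I would describe the ambiguity event geometrically. Given $\underline{z}\in E$, the received point $\underline{y}=\underline{x}+\underline{z}$ admits a second valid representation $\underline{y}=\underline{x}'+\underline{e}'$ with $\underline{x}'\in\Lambda_f$, $\underline{e}'\in E$ and $\underline{x}'\neq\underline{x}$ exactly when the nonzero lattice vector $\underline{\lambda}=\underline{x}'-\underline{x}$ obeys $\underline{z}-\underline{\lambda}\in E$, i.e.\ $\underline{z}\in\underline{\lambda}+E$. Hence an ambiguity occurs iff $\underline{z}\in\bigcup_{\underline{\lambda}}(\underline{\lambda}+E)$ over the nonzero lattice points, and a union bound over $\underline{\lambda}\in g(\gamma\mathbf{B}\Lambda_C')$ gives, for each fixed lattice in the ensemble,
\begin{align}
P_{\text{amb}|E}\leq\frac{1}{\text{Pr}\{\underline{z}\in E\}}\sum_{\underline{\lambda}\in g(\gamma\mathbf{B}\Lambda_C')}\int_{\mathbb{R}^{2n}}p_{\underline{z}}(\underline{z})\,\mathbf{1}_E(\underline{z})\,\mathbf{1}_E(\underline{z}-\underline{\lambda})\,d\underline{z},
\end{align}
where $p_{\underline{z}}$ is the noise density and $\mathbf{1}_E$ the indicator of $E$.

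Next I would average over the balanced ensemble $\mathcal{C}$. Since every summand is nonnegative, the arithmetic average, the sum over $\underline{\lambda}$, and the integral may be interchanged, which rewrites the double average as $\text{Pr}\{\underline{z}\in E\}^{-1}$ times the code-average of a lattice sum of the single function
\begin{align}
H(\underline{\lambda})\triangleq\int_{\mathbb{R}^{2n}}p_{\underline{z}}(\underline{z})\,\mathbf{1}_E(\underline{z})\,\mathbf{1}_E(\underline{z}-\underline{\lambda})\,d\underline{z}.
\end{align}
Writing the bound this way is the crux: $H$ is a \emph{single} function of $\underline{\lambda}$, so no uniformity over a family of integrands is required. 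Because $E$ is bounded, $H$ is supported in the bounded set $E-E$; because $E$ is Jordan measurable, $\partial E$ is null, so $H$ is continuous (by dominated convergence) and hence Riemann integrable. By Fubini and translation invariance of Lebesgue measure,
\begin{align}
\int_{\mathbb{R}^{2n}}H(\underline{\lambda})\,d\underline{\lambda}=\int_{\mathbb{R}^{2n}}p_{\underline{z}}(\underline{z})\,\mathbf{1}_E(\underline{z})\Big(\int_{\mathbb{R}^{2n}}\mathbf{1}_E(\underline{z}-\underline{\lambda})\,d\underline{\lambda}\Big)d\underline{z}=\text{Vol}(E)\,\text{Pr}\{\underline{z}\in E\}.
\end{align}

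Finally I would apply the Minkowski-Hlawka theorem to $f=H$, which yields
\begin{align}
\frac{1}{|\mathcal{C}|}\sum_{C\in\mathcal{C}}\sum_{\underline{\lambda}\in g(\gamma\mathbf{B}\Lambda_C')}H(\underline{\lambda})\;\longrightarrow\;\text{Vol}(\mathcal{V}_{\Lambda_f})^{-1}\int_{\mathbb{R}^{2n}}H(\underline{\lambda})\,d\underline{\lambda}=\frac{\text{Vol}(E)\,\text{Pr}\{\underline{z}\in E\}}{\text{Vol}(\mathcal{V}_{\Lambda_f})},
\end{align}
exactly in the limit $q\to\infty$, $\gamma\to0$ with $\text{Vol}(\mathcal{V}_{\Lambda_f})$ held fixed. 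Thus for any $\delta>0$ the left-hand side is below $(1+\delta)\text{Vol}(E)\,\text{Pr}\{\underline{z}\in E\}/\text{Vol}(\mathcal{V}_{\Lambda_f})$ for all sufficiently large $q$ and sufficiently small $\gamma$; dividing by $\text{Pr}\{\underline{z}\in E\}$, which cancels and so renders the conditioning distribution immaterial, delivers the claimed bound on $\overline{P_{\text{amb}|E}}$.

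The step I expect to be the main obstacle is verifying that the correlation function $H$ meets the hypotheses of the Minkowski-Hlawka theorem, namely Riemann integrability and bounded support; this is precisely where the assumptions that $E$ is bounded and Jordan measurable are consumed (boundedness controls the support through $E-E$, and Jordan measurability forces $\partial E$ to be null so that the translate-overlap integral $H$ is continuous). Collapsing the family $\{\mathbf{1}_E(\,\cdot\,-\underline{\lambda})\}$ into the single integrand $H$ \emph{before} invoking Minkowski-Hlawka is what lets me sidestep any delicate uniform-convergence argument in $\underline{z}$; the remaining manipulations — the union bound, the interchange of averages, and the Fubini evaluation of $\int H$ — are routine.
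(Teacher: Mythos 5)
Your proof is correct and takes essentially the same route the paper relies on (it defers to Loeliger's argument): a union bound over nonzero lattice points reduces $\overline{P_{\text{amb}|E}}$ to the ensemble average of a lattice sum of the single overlap function $H$, the Minkowski--Hlawka theorem just established is applied to $H$, and Fubini makes the factor $\text{Pr}\{\underline{z}\in E\}$ cancel. Your check that boundedness and Jordan measurability of $E$ give $H$ bounded support and Riemann integrability (via continuity) is precisely the detail the paper leaves implicit in citing Loeliger.
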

Note that as $n\rightarrow\infty$, $E$ will approach the shell of a $2n$-dimensional ball with radius $r_{\underline{z}}=\sqrt{nP_{\underline{z}}}$. Thus
\begin{align}
\text{Vol}(E)\leq\text{Vol}(\mathcal{B}(\sqrt{nP_{\underline{z}}}))=\frac{\left(\sqrt{\pi}r_{\underline{z}}^2 \right)^{n}}{\Gamma(n+1)}\hspace{0.2in}\text{as}\hspace{0.1in}n\rightarrow\infty,
\end{align}
which immediately follows that
\begin{align}
\overline{P_{\text{amb}|E}}\leq(1+\delta)\left(\frac{r_{\underline{z}}}{r_{\gamma\mathbf{B}\Lambda_C}^{\text{eff}}}\right)^{2n},
\end{align}
as $n\rightarrow\infty$. This implies that $\overline{P_{\text{amb}|E}}\rightarrow0$ as $n\rightarrow\infty$ for $r_{\underline{z}}<{r_{\gamma\Lambda_C}^{\text{eff}}}$. Hence for a given lattice $\Lambda_f=\gamma\mathbf{B}\Lambda_C$, $P_{\text{amb}|E}\rightarrow0$ in probability as $n\rightarrow\infty$. Taking into account that $P(\underline{z}\notin E)\rightarrow0$ as $n\rightarrow\infty$, from \eqref{pamb} we conclude that $P_e\rightarrow0$ in probability as $n\rightarrow\infty$. This completes the proof.

\bibliography{references}

\begin{thebibliography}{10}

\bibitem{erez2004achieving}
U.~Erez and R.~Zamir, ``Achieving 1/2 log (1+ {SNR}) on the {AWGN} channel with
  lattice encoding and decoding,'' {\em IEEE Trans. Info. Theory}, vol.~50,
  pp.~2293--2314, Oct. 2004.

\bibitem{zamirmultiterminal}
R.~Zamir, S.~Shamai, and U.~Erez, ``Nested linear/lattice codes for structured
  multiterminal binning,'' {\em IEEE Trans. Info. Theory}, vol.~48,
  pp.~1250--1276, Jun. 2002.

\bibitem{Polty}
G.~Poltyrev, ``On coding without restrictions for the {AWGN} channel,'' {\em
  IEEE Trans. Info. Theory}, vol.~40, pp.~409--417, Mar. 1994.

\bibitem{loeliger}
H.~A. Loeliger, ``Averaging bounds for lattices and linear codes,'' {\em IEEE
  Trans. Info. Theory}, vol.~43, pp.~1767--1773, Nov. 1997.

\bibitem{erez2005lattices}
U.~Erez, S.~Litsyn, and R.~Zamir, ``Lattices which are good for (almost)
  everything,'' {\em IEEE Trans. Info. Theory}, vol.~51, pp.~3401--3416, Oct.
  2005.

\bibitem{SC}
O.~Shalvi, N.~Sommer, and M.~Feder, ``Signal codes,'' {\em Info Theory
  Workshop}, pp.~332--336, Mar. 31--Apr. 4 2003.

\bibitem{LDLC}
N.~Sommer, M.~Feder, and O.~Shalvi, ``Low density lattice codes,'' {\em IEEE
  Trans. Info. Theory}, vol.~54, pp.~1561--1585, Apr. 2008.

\bibitem{wilson}
M.~P. Wilson, K.~R. Narayanan, H.~Pfister, and A.~Sprintson, ``Joint physical
  layer coding and network coding for bi-directional relaying,'' {\em IEEE
  Trans. Info. Theory}, vol.~56, pp.~5641--5654, Nov. 2010.

\bibitem{Nam}
W.~Nam, S.~Y. Chung, and Y.~H. Lee, ``Capacity of the gaussian two-way relay
  channel to within 1/2 bit,'' {\em IEEE Trans. Info. Theory}, vol.~56,
  pp.~5488--5494, Nov. 2010.

\bibitem{Niesen}
U.~Niesen and P.~Whiting, ``The degrees of freedom of compute-and-forward,''
  {\em IEEE Trans. Info. Theory}, vol.~58, pp.~5214--5232, Aug. 2012.

\bibitem{nazer2011CF}
B.~Nazer and M.~Gastpar, ``Compute-and-forward: Harnessing interference through
  structured codes,'' {\em IEEE Trans. Info. Theory}, vol.~57, pp.~6463--6486,
  Oct. 2011.

\bibitem{feng}
C.~Feng, D.~Silva, and F.~R. Kschischang, ``An algebraic approach to
  physical-layer network coding,'' {\em IEEE Intl. Symp. on Info. Theory},
  pp.~1017--1021, Jun. 2010.

\bibitem{other_eis}
Q.~T. Sun, J.~Yuan, T.~Huang, and K.~W. Shum, ``Lattice network codes based on
  {Eisenstein} integers,'' {\em IEEE Trans. Comm.}, vol.~61, pp.~2713--2725,
  2013.

\bibitem{conway1999sphere}
J.~H. Conway and N.~J.~A. Sloane, {\em Sphere Packings, Lattices, and Groups}.
\newblock Springer-Verlag, 1999.

\bibitem{Rogers}
C.~A. Rogers, ``A note on coverings,'' {\em Mathematica}, vol.~4, pp.~1--6,
  1957.

\bibitem{Zamir-Feder1996Quan}
R.~Zamir and M.~Feder, ``On lattice quantization noise,'' {\em IEEE Trans.
  Info. Theory}, vol.~42, pp.~1152--1--159, Jul. 1996.

\bibitem{forney}
G.~D. Forney~Jr., ``Coset codes. {I.} {Introduction} and geometrical
  classification,'' {\em IEEE Trans. Info. Theory}, vol.~34, pp.~1123--1151,
  Sep. 1988.

\bibitem{Voronoi_Codes}
J.~H. Conway and N.~J.~A. Sloane, ``A fast encoding method for lattice codes
  and quantizers,'' {\em IEEE Trans. Info. Theory}, vol.~29, pp.~820--824, Nov.
  1983.

\bibitem{Constr_A_Paper}
J.~Leech and N.~J.~A. Sloane, ``Sphere packings and error correcting codes,''
  {\em Canadian Journal of Mathematics}, vol.~23, pp.~718--745, Nov. 1971.

\bibitem{Ordentlich}
O.~Ordentlich and U.~Erez, ``A simple proof for the existence of {``good''}
  pairs of nested lattices,'' {\em arxiv.org}, Sep. 2012.

\bibitem{quarternions}
J.~H. Conway and D.~Smith, {\em On Quaternions and Octonions}.
\newblock CRC Press, 2003.

\bibitem{Algebra_Book}
T.~W. Hungerford, {\em Algebra (Graduate Texts in Mathematics)}.
\newblock Springer, 1974.

\bibitem{german}
R.~Breusch, ``Zur verallgemeinerung des bertrandsehen postulates, da{\ss}
  zwischen x und 2x stets primzalflen liegen,'' {\em Mathematische
  Zeitschrift}, vol.~34, pp.~505--526, 1932.

\bibitem{MIL}
H.~V. Venderson and S.~R. Searle, ``On deriving the inverse sum of matrices,''
  {\em SIAM Review}, pp.~53--60, 1981.

\bibitem{viterbo99}
E.~Viterbo and J.~Boutros, ``A universal lattice code decoder for fading
  channels,'' {\em IEEE Trans. Info. Theory}, vol.~45, pp.~1639--1642, July
  1999.

\bibitem{Agrell}
E.~Agrell, T.~Eriksson, A.~Vardy, and K.~Zeger, ``Closest point search in
  lattices,'' {\em IEEE Trans. Info. Theory}, vol.~48, pp.~2201--2214, Aug.
  2002.

\bibitem{LLL}
A.~K. Lenstra, H.~Lenstra, and L.~Lovasz, ``Factoring polynomials with rational
  coefficients,'' {\em Math Ann.}, pp.~515--534, 1982.

\bibitem{Napias}
H.~Napias, ``A generalization of the {LLL}-algorithm over {Euclidean} rings or
  orders,'' {\em J. Theorie des Nombres de Bordeaux}, pp.~387--396, 1996.

\bibitem{Sakzad}
A.~Sakzad, E.~Viterbo, Y.~Hong, and J.~Boutros, ``On the ergodic computation
  rate for compute-and-forward,'' {\em International Symposium on Network
  Coding (NetCod)}, pp.~131--136, 2012.

\bibitem{viterbo}
A.~Sakzad, J.~Harshan, and E.~Viterbo, ``Integer-forcing {MIMO} linear
  receivers based on lattice reduction,'' {\em IEEE Trans. Wireless Comm.},
  vol.~12, pp.~4905--4915, 2013.

\end{thebibliography}


\begin{thebibliography}{99}

\bibitem{nazer2011CF}
B. Nazer and M. Gastpar, ``Compute-and-Forward: Harnessing Interference through Structured Codes," \emph{IEEE Trans. Info. Theory, } vol. 57, no. 10, pp. 6463-6486, Oct. 2011.

\bibitem{feng}
C. Feng, D. Silva, and F. R. Kschischang, ``An algebraic approach to physical-layer network coding," \emph{IEEE Intl. Symp. on Info. Theory}, pp. 1017-1021, 2010.

\bibitem{erez2005lattices}
U.~Erez, S.~Litsyn, and R.~Zamir,
\newblock {``Lattices which are good for (almost) everything''},
\newblock {\em {IEEE} Trans. Info. Theory}, vol. 51, no.10, pp. 3401--3416, Oct. 2005.

\bibitem{erez2004achieving}
U.~Erez and R.~Zamir,
\newblock {``Achieving 1/2 log (1+ SNR) on the AWGN channel with lattice encoding
  and decoding''},
\newblock {\em {IEEE} Trans. Info. Theory}, vol. 50, no. 10, pp. 2293-2314, Oct. 2004.

\bibitem{zamirmultiterminal}
 R.~Zamir, S.~Shamai, and U.~Erez,
 \newblock {``Nested linear/lattice codes for structured multiterminal binning''},
 \newblock {\em {IEEE} Trans. Info. Theory}, vol. 48, no.6, pp. 1250--1276, Jun. 2002.

 \bibitem{Polty}
G. Poltyrev, ``On coding without restrictions for the AWGN channel'', \emph{IEEE Trans. Info. Theory}, vol. 40, no. 2, pp. 409-417, Mar. 1994.


\bibitem{conway1999sphere}
J.H. Conway and N.J.A. Sloane,
\newblock {\em {Sphere Packings, Lattices, and Groups}}.
\newblock Springer Verlag, 1999.

\bibitem{Zamir-Feder1996Quan}
R. Zamir and M. Feder, ``On lattice quantization noise," \emph{IEEE Trans. Info. Theory}, pp. 1152-1159, vol. 42, no. 4, July 1996.

\bibitem{LDLC}
N.~Sommer, M.Feder, and O. Shalvi, ``Low density lattice codes," \emph{IEEE Trans. Info. Theory}, vol. 54, no. 4, pp. 1561-1585, April 2008.

\bibitem{SC}
O.~Shalvi, N.Sommer, and M. Feder, ``Signal codes," in \emph{ Info. Theory Workshop}, pp. 332--336, 2003.

\bibitem{numbertheory}
B. Fine and G. Rosenberger,
\newblock {\em {Number Theory}}.
\newblock Springer, 2006.

\bibitem{german}
R. Breusch,``Zur verallgemeinerung des Bertrandsehen postulates, daB zwischen x und 2x stets primzalflen liegen'', \emph{Mathematische Zeitschrift}, vol. 34, no.1, pp. 505-526, 1932.

\bibitem{loeliger}
H.~A.~Loeliger, ``Averaging bounds for lattices and linear codes," \emph{IEEE Trans. Info. Theory}, vol. 43, no. 5, pp. 1767-1773, Nov. 1997.


\bibitem{quarternions}
J. H. Conway and D. Smith, \emph{On Quaternions and Octonions}. CRC Press, 2003.

\bibitem{forney}
G. D. Forney Jr., ``Coset codes. I. Introduction and geometrical classification, '' \emph{IEEE Trans. Info. Theory}, vol. 34, no. 5, pp. 1123-1151, Sept. 1998.


\bibitem{Polty}
G. Poltyrev, ``On coding without restrictions for the AWGN channel, '' \emph{IEEE Trans. Info. Theory}, vol. 40, no. 2, pp. 409-417, Mar. 1994.

\bibitem{Rogers}
C. A. Rogers, ``A note on coverings, '' \emph{Mathematica}, vol. 4, pp. 1-6, 1957.

\bibitem{Tomlinson}
M. Tomlinson, `` New Automatic Equalizer Employing Modulo Arithmetic, '' \emph{Elect. Letters}, pp. 138-139, Mar., 1971.

\bibitem{SLDPC}
S. Kudekar, T. Richardson, and R. Urbanke, ``Threshold Saturation via Spatial Coupling: Why Convolutional LDPC Ensembles Perform so well over the BEC, ''\emph{IEEE Trans. Info. Theory}, vol. 57, no. 2, pp. 803-834, Feb. 2011.

\bibitem{Kudekar_universal}
S. Kudekar, T. Richardson, and R. Urbanke, `` Spatially Coupled Ensembles Universally Achieve Capacity under Belief Propagation, ''\emph{arxiv.org}, Jan., 2012

\bibitem{Kudekar}
S. Kudekar, T. Richardson, and R. Urbanke, ``Threshold Saturation via Spatial Coupling: Why Convolutional LDPC Ensembles Perform so well over the BEC, ''\emph{IEEE Trans. Info. Theory}, vol. 57, no. 2, pp. 803-834, Feb. 2011.


\bibitem{wilson}
M. P. Wilson, K. Narayanan, and H. Pfister, and A. Sprintson``Joint Physical Layer Coding and Network Coding for Bi-Directional Relaying, ''\emph{IEEE Trans. Info. Theory}, vol. 56, no. 11, pp. 5641-5654, Nov. 2010.

\bibitem{Nam}
W. Nam, S.Y. Chung, and Y. H. Lee, ``Capacity of the Gaussian  Two-Way Relay Channel to Within 1/2 bit,"  \emph{IEEE Trans. Info. Theory, } vol. 56, no. 11, pp. 5488–5494, Nov. 2010.

\bibitem{Forney}
G. D. Forney Jr., ``Coset Codes. I. Introduction and Geometrical Classification," \emph{IEEE Trans. Info. Theory,} vol. 34, no. 5, pp. 1123-1151, Sep. 1988.

\bibitem{Voronoi Codes}
J. Conway and N. Sloane, ``A Fast Encoding Method for Lattice Codes and Quantizers," \emph{IEEE Trans. Info. Theory,} vol. 29, no. 6, pp. 820-824, Nov. 1983.

\bibitem{Constr A Paper}
J. Leech and N. Sloane, ``Sphere Packings and Error Correcting Codes," \emph{Canadian Journal of Mathematics,} vol. 23, no. 4, pp. 718-745, Nov. 1971.

\bibitem{Algebra Book}
T. W. Hungerford, \emph{Algebra (Graduate Texts in Mathematics)}. Springer, 1974.

\bibitem{Modern Coding Theory}
T. Richardson and R. Urbanke, \emph{Modern Coding Theory}. Cambridge University Press, 2008.

\bibitem{Ordentlich}
O. Ordentlich and U. Erez, ``A Simple Proof for the Existence of ``Good" Pairs of Nested Lattices,"  \emph{arxiv.org}, Sep. 2012.

\bibitem{Niesen}
U. Niesen and P. Whiting, ``The Degrees of Freedom of Compute-and-Forward," \emph{IEEE Trans. Info. Theory}, vol. 58, no. 8, pp. 5214-5232, Aug. 2012.

\end{thebibliography}

\end{document}